\newtheorem{theorem}{Theorem}[section]
\newtheorem{lemma}[theorem]{Lemma}
\newtheorem{proposition}[theorem]{Proposition}
\theoremstyle{definition}
\newtheorem{definition}[theorem]{Definition}
\newtheorem{example}[theorem]{Example}
\theoremstyle{remark}
\newcommand{\lift}[2]{%
\setlength{\unitlength}{1pt}
\begin{picture}(0,0)(0,0)
\put(0,{#1}){\makebox(0,0)[b]{${#2}$}}
\end{picture}
}
\newcommand{\lowerarrow}[1]{%
\setlength{\unitlength}{0.03\DiagramCellWidth}
\begin{picture}(0,0)(0,0)
\qbezier(-28,-4)(0,-18)(28,-4)
\put(0,-14){\makebox(0,0)[t]{$\scriptstyle {#1}$}}
\put(28.6,-3.7){\vector(2,1){0}}
\end{picture}
}
\newcommand{\upperarrow}[1]{%
\setlength{\unitlength}{0.03\DiagramCellWidth}
\begin{picture}(0,0)(0,0)
\qbezier(-28,11)(0,25)(28,11)
\put(0,21){\makebox(0,0)[b]{$\scriptstyle {#1}$}}
\put(28.6,10.7){\vector(2,-1){0}}
\end{picture}
}
\begin{document}

\title{Transition probability spaces in loop quantum gravity}

\author{Xiao-Kan Guo}

\address{Wuhan Institute of Physics and Mathematics, Chinese Academy of Sciences, Wuhan 430071, China, and University of  Chinese Academy of Sciences, Beijing 100049, China}
\email{kankuohsiao@whu.edu.cn}





\date{\today}



\begin{abstract}
We study the (generalized) transition probability spaces, in the sense of Mielnik and Cantoni, for spacetime quantum states in loop quantum gravity. First, we show that loop quantum gravity admits the structures of transition probability spaces. This is exemplified by first checking such structures in covariant quantum mechanics, and then identify the transition probability spaces  in spin foam models via a simplified discrete version of general boundary formulation. The transition probability space thus defined gives a simple way to reconstruct the discrete analog of the  Hilbert space of the canonical theory and the relevant quantum logical structures. Second, we show that the transition probability space and in particular the spin foam model are  2-categories. Then we discuss how to realize in spin foam models two proposals by Crane about the mathematical structures of quantum gravity, namely the quantum topos and causal sites. We conclude that transition probability spaces provide us with an  alternative framework to understand various foundational questions of loop quantum gravity.
\end{abstract}
\maketitle


\section{Introduction}
Right after the advent of quantum mechanics (QM), the logical structures of QM was studied by  Birkhoff and von Neumann \cite{BvN36}. This lattice-theory-based quantum logic not only provides us with  a deep foundation of the Hilbert space formalism of QM \cite{Piron64}, but also leads us to a more natural mathematical structure for QM (namely, the type II$_1$ factor von Neumann algebra \cite{Red98}). This already indicates the possibility of describing QM by theories different from or even beyond  the theory of Hilbert spaces. Since, for both phenomenological and operational reasons, the most relevant quantities in QM are transition probabilities rather than the logical structures, a possible alternative description of QM can be based on the space of (generalized) transition probabilities \cite{Miel68,Bel76,Can75,Can82}.  The advantage of using transition probability spaces is that  the state space of QM are characterized directly by transition probabilities between quantum states, which are undefinable in the original quantum logic (for some examples of which see Ref. \cite{Gud81}) unless additional structures (such as continuous geometry) are  introduced into the framework.
Note that such a change of description does not mean that we have abandoned the ontologically logical structures and chosen an epistemologically phenomenological approach. In fact, the logical and probabilistic aspects of QM can be included in a unified framework \cite{Gun67}, and at a deeper level, they can be simultaneously and coherently defined on a lattice \cite{HSP14}. Therefore, it remains fundamental but at the same time has an operational meaning  in real experiments.

In this paper we will  study the theoretical structures of loop quantum gravity (LQG) with the help of transition probability spaces. Methodologically, since LQG is {\it within} the quantum formalism, these structures of transition probability spces are sure to be correct in LQG, however, they are not easy to be identified. What we hope is that this alternative description of the quantum formalism will clarify the structures of LQG. On the other hand, if a theory of quantum gravity is in general  {\it beyond} the current formalism of quantum theory, we think that the transition probability space  will be a possible approach to that more general theory.

A direct  motivation of this work comes from the spin foam approach to LQG \cite{Perez13}, where the crucial quantity is the spin foam amplitude, the transition amplitude between spin network states. The second motivation comes from the observation that the rigorous kinematic Hilbert space of canonical loop quantum gravity constructed from the holonomy-flux algebra and the quantum spin dynamics thereon \cite{Thiebook}  allow  further  mathematical constructions of a quantum logic or  a transition amplitude space. It is therefore hoped that these two approaches can be related or unified in a transition probability (or amplitude) space.
Besides, it is well-known that observables are difficult to define, even in classical general relativity, and the situation is not getting better in quantum gravity. Now in the approach via transition probability spaces, the fundamental concepts are shifted from the observables to the states and the transition probabilities between them, which is obviously more suitable for analyzing quantum gravity. 

The first goal of this paper is to describe spin foam models by transition probabilty spaces. To this end, we will take the following three steps of descriptions:
\begin{enumerate}
\item{\it General covariant QM}. This is because the transition probability spaces in the literature are defined only for nonrelativistic QM. In relativistic gravitational systems, general covariance is required. We thus consider the general covariant QM proposed by Reisenberger\&Rovelli \cite{RR02} as an example, and construct the corresponding transition probability space. We also consider the timeless path integral \cite{Chiou13} derived from the general covariant QM where the transition amplitudes have been constructed. Although the general covariant QM cannot be easily generalized to the full quantum gravity, this serves as an example to identify the transition probability space.
\item{\it General boundary formulation}. The general boundary formulation of quantum theories \cite{oe03a} provides a bridge connecting QM and quantum gravity. For the purpose of describing quantum gravity, we work in a simplified general boundary formulation  with the manifold structures replaced by discrete data on them.
We consider the probabilities in this simplified formulation and show that they define a transition probability space.
\item{\it Spin foam models}. When constructed from a group field theory, the transition amplitudes in a spin foam model formally define a transition probability space.
As is shown in Ref. \cite{CF08}, the EPRL(/FK) spin foam model can be cast into the general boundary formulation, which also manifests a description via transition probability spaces.
\end{enumerate}
That all these theories admit the descriptions via tansition probability spaces is not surprising, but the supporting arguments are far from obvious. A direct consequence of this description is the ability to reconstruct the canonical formalism from the transition amplitudes, which not only corroborates but simplifies earlier constructions. Then the logic of quantum gravity can be readily constructed from the transition probability space, and in particular we relate the quantum logical ``states" to transition probabilities to highlight the gauge freedoms in the spinfoam path integrals.
These are done in Sec. \ref{sec3}.

Concerning the logical approaches to quantum gravity, there is a way to classify many different logical formulations of quantum gravity by considering possible modifications of the classical 2-valued truth value map
\[\phi:\mathfrak{B}\rightarrow\mathbb{Z}_2\]
where $\mathfrak{B}$ is the Boolean algebra for classical logic and $\phi$ is a homomorphism of algebras. This consideration is first taken by Sorkin in Ref. \cite{Sor07} where the homomorphism $\phi$ is relaxed and the resulting ``quantum logic" is the anhomomorphic logic. This anhomomorphic logic is suitable for the causal sets approach to quantum gravity, which in a sense can be related to LQG via the quantum causal histories of spin networks \cite{MS97}.
One can also choose to modify the 2-value feature and change $\mathbb{Z}_2$ into an elementary topos. This results in the topos quantum theory, which has been applied to quantum gravity (although not in LQG, cf. \cite{Flo09}). On the other hand, The modification of the Boolean algebra $\mathfrak{B}$ into a Hilbert lattice, which occupies the name ``quantum logics", has not been sucessfully applied in quantum gravity. Previous attempts \cite{Ant94} resulted in (intuitionistic) logical structures  more general than quantum logics. In Sec. \ref{sec3}, we construct a quantum logical structre $L$ for LQG from the transition probability space of spin foams, but by calculating the Kolmogorov-Sinai entropy on $L$  we find this  $L$ admits a Lindenbaum-Tarski type algebra conforming to the previous results \cite{Ant94}. Instead of staying in logics, we next turn to the categorical aspects of such structures.

The second purpose of this paper is therfore to search for possible categorical characterizations of transition probability spaces in LQG. We also give explict realizations in spin foam models of several proposals given by L. Crane. We first show that, following the initiative of Baez \cite{Baez98},  the transition probability spaces and hence the spin foam models themselves are 2-categories. In particular, similar to but different from those in Ref. \cite{DMY10}, the 2-cells of the 2-category of spin foams are  identified to be the branched covers over spin netwoks from which the spin foams emanate. Based on these 2-categories, we  show that the property transitions in a quantum logic can be described by state tranitions in the classical Boolean algebra obtained through  localizing onto the presheaves, which realizes the structure of a quantum topos. Furthermore, we explore the possibility that causality in spin foam models is encoded in the causal category or causal sites of spin foams. These are collected in Sec. \ref{sec4}. 

 In the next section, we will briefly review the definitions and basic properties of transition probability spaces, their generalizations, and transition amplitude spaces. A new view of these structures from the operational probabilistic quantum theory is also given. Before that, let us recollect here some basic concepts in LQG.
\subsection{Basic LQG}
The canonical LQG \cite{Thiebook} on a four-dimensional manifold $M=\Sigma\times\mathbb{R}$ has the structure of an SU(2) gauge theory with Ashtekar-Barbero variables  $(A^i_a, E^a_i)$, where $A^i_a$ is an $\mathfrak{su}(2)$-valued connection 1-form and $E^a_i$ is a densitized triad from which the 3D metirc of $\Sigma$ is recovered as $h_{ab}=e_a^ie_b^j\delta_{ij}$ with $E^a_i=\frac{1}{2}\epsilon^{abc}\epsilon_{ijk}e^j_be^k_c\text{sgn}(\det e)$. Here $a,b,c,...$ are spatial indices and $i,j,k,...$ are $\mathfrak{su}(2)$ indices. The nontrivial Poisson brackets are
\[\{A^i_a(x),E^b_j(y)\}=\kappa\beta\delta^b_a\delta^i_j\delta^3(x,y),\quad\kappa=8\pi G,\]
where $\beta$ is the Barbero-Immirzi parameter. Then the Gaussian constraint $G_i=D_aE^a_i$ ensures the SU(2) gauge invariance, and the other constraints mean the invariance under spatial, i.e. $V_a=F^i_{ab}E^b_i$, and temporal, i.e. the Hamiltonian constraint $H$, diffeomorphisms.

The kinematical states of LQG lives in a space $\overline{\mathcal{A}}$ of generalized connections. To define $\overline{\mathcal{A}}$, consider a piecewise analytic graph $\gamma$ embedded in $\Sigma$ with edge set $E$ and vertice set $V$. The set of all such graphs is partially ordered (and directed) in the sense that for any two graphs $\gamma,\gamma'$, $\gamma\geqslant\gamma'$ if every $e'\in E(\gamma')$ can be obtained from a sequence of $e\in E(\gamma)$ by composition and reversal. The space of smooth connections associated with $\gamma$ is $\mathcal{A}_\gamma=\text{Hom}(\bar{\gamma}, G)$ where $\bar{\gamma}$ is the subgroupoid of the groupoid $\mathcal{P}$ of edges which are  equivalence classes of semianalytic curves, and $G=$SU(2) is the gauge group. For two graphs $\gamma,\gamma'$ with $\gamma\geqslant\gamma'$, one can define a projection $p_{\gamma,\gamma'}:\mathcal{A}_{\gamma}\rightarrow\mathcal{A}_{\gamma'}$. Since the set of embedded graphs are directed, one can take the projective limit of the directed family $(\mathcal{A}_\gamma,p_{\gamma,\gamma'})$ so that $\overline{\mathcal{A}}=\text{Hom}(\mathcal{P}, \text{SU(2)})$. Thus, the kinematical Hilbert space of LQG is 
\[\mathcal{H}_{\text{kin}}=L^2(\overline{\mathcal{A}},d\mu_0)\]
where $\mu_0$ is the unique Ashtekar-Lewandowski measure on $\overline{\mathcal{A}}$. Incorporating the SU(2) gauge invariance, one can further restrict the kinematical states to those gauge-covariant ones. To this end, consider the graph Hilbert spaces $\mathcal{H}_\gamma=L^2(A_\gamma,d\mu_\gamma)$ and the projections between them as isometric embeddings, $p^*_{\gamma\gamma'}:\mathcal{H}_{\gamma'}\rightarrow\mathcal{H}_\gamma$ with $\gamma\geqslant\gamma'$. Then on $\cup_\gamma\mathcal{H}_\gamma$ one can define an equivalence relation for functions $f_\gamma$ as
\[f_{\gamma_1}\sim f_{\gamma_2}\quad\text{iff}\quad\forall\gamma_3\geqslant\gamma_1,\gamma_2,~p^*_{\gamma_3\gamma_1}f_{\gamma_1}=p^*_{\gamma_3\gamma_2}f_{\gamma_2}.\]
Then the gauge covariant kinematical state space is
\[\mathcal{H}_0=\overline{\cup_\gamma\mathcal{H}_\gamma/\sim}\]
where the completion is taken with respect to the inner product induced by the directed/largest Hilbert space. Intuitively, the states in the graph Hilbert spaces can be expanded in terms of spin network states. In fact,
\[\mathcal{H}_\gamma=L^2(\text{SU(2)}^E/\text{SU(2)}^V,d\mu_{\text{Haar}})\]
where the measure is the Haar measure on SU(2). Then to each edge $e$ one assigns a spin $j^e\in\mathbb{N}/2$, the representation space of which is $V_{j^e}$ with $\dim V_{j^e}=2j^e+1$. And to each vertice $v$ is assigned an SU(2) intertwiner
\[\mathcal{I}_v:\bigotimes_{e,\text{ingoing}}V_{j^e}\rightarrow \bigotimes_{e,\text{outgoing}}V_{j^e},\quad\text{or simply}\quad\bigotimes_{v\in e}V_{j^e}\rightarrow\mathbb{C}.\]
These assignments defines a spin network state $\ket{\gamma,\{j^e\},\{\mathcal{I}_v\}}$ as an orthonormal basis for $\mathcal{H}_\gamma$, since by Peter-Weyl theorem one can expand it in terms of Wigner matrices which are orthogonal and by suitably choosing the interwtiners they can be orthonormal.

Solving  the Hamiltonian constraint in four-diemnsion gives the dynamics of LQG. Combinatorically, the dynamics of embedded spin networks in three-dimension looks like the Pachner moves of triangulations of a three-manifold. Now considering the time evolution of those embedded spin networks with Pachner-like moves, one arrives at the intuitive picture of spin foams as the path integral representation for the LQG dynamics \cite{Perez13}. From the name ``spin foam" we konw that they are (spacetime) foams colored with spins. By a foam we mean an oriented 2-cell complex consisting of 0-, 1-, and 2-cells, which can be interpreted respectively as vertices $v$, edges $e$ and faces $f$ in a complex $\Delta^*$ dual to the triangulation $\Delta$ of the manifold. The coloring is assigning spins to faces and intertwiners to edges. These coloring reduces to the coloring of spin networks when restricted to foliations. On each internal vertex $v$ of a spin foam $\sigma=(\Delta^*,j^f,\mathcal{I}_e)$, one can introduce a trace by contracting the intertwiners
\[\text{Tr}_v\sigma=\text{Tr}_v\Bigl(\bigotimes_{e,\text{ingoing}}\mathcal{I}_e^\dagger\otimes \bigotimes_{e,\text{outgoing}}\mathcal{I}_e\Bigr)\]
which intuitively gives the spin network function associated with the Pachner-like moves in the local region around $v$ \cite{KKL10}. If the foliation allows boundaries, we can similarly introduce $\partial\sigma$ and $\text{Tr}_v(\partial\sigma),v\in\partial\sigma$. Then the amplitude of the a spin foam $\sigma$ is
\[A(\sigma)=\int_{g\in\text{Irrep}(G)} dg\text{Tr}\sigma=\int_{g\in\text{Irrep}(G)} dg\text{Tr}_v\sigma\text{Tr}_v(\partial\sigma)\]
where $g$ is the irreducible representation of the gauge group $G$ assigned to a face of the foam $\sigma$.
The partition function of a spin foam model is then a sum over the spin network histories weighted by spin foam amplitudes
\[Z=\sum_\sigma W(\sigma) A(\sigma)\]
where $W$ is a measure factor.
The explicit spin foam models can be obtained from the  topological BF theory (plus  a Nieh-Yan topological term if one wants EPRL) by imposing simplicity constraints and then quantizing everything \cite{Perez13,KKL10}, or from the perturbative expansion in Feynman diagrams in group field theories (GFT) \cite{ORT15}.

It is instructive to phrase these structures in categorical terms. (See Ref. \cite{MacLane} or Appendix \ref{BBB} for categorical concepts.) As has been mentioned, the embedded graphs defining  spin networks form a groupoid $\mathcal{P}$ with vetrices as objects and edges as (iso)morphisms. Dually, the spin networks form a category with spins on edges as objects and intertwiners on vertices as morphisms. Analogously, one could say that  spin foams form a category with spins on faces as objects and intertwiners on edges as morphisms. But they do not conform to the picture of path integrals. Instead, we can take the spin networks on a foliation as objects and spin foams between them as morphisms to form a category $\mathcal{F}$. Combining $\mathcal{P}$ and $\mathcal{F}$ we see that there should be a 2-category containg both $\mathcal{P}$ and $\mathcal{F}$ \cite{Baez98}. In this paper, we show that $\mathcal{F}$ itself is a 2-category, thereby pointing to  higher categorical structures in LQG.

~

{\it Acknowledgement.} I thank Juven C. Wang for helpful comments. Financial support from National Natural Science Foundation of China under Grant Nos. 11725524, 61471356 and 11674089 is gratefully acknowledged.
\section{Review of Transition Probability Spaces}
We first recall some basic definitions and known results concerning transition probability spaces. The details can be found, for instance, in Refs. \cite{Miel68,Bel76,Can75,Can82,Gud81}.  In Sec. \ref{sec24} we show that the transition probabilities just introduced are consistent with operational probabilistic  framework for QM.

\subsection{Transition probability spaces}\label{sec21}
Let $S$ be a nonempty set of pure quantum states. In the language of Hilbert spaces, a pure state corresponds to an one-dimensional projector $x$ with a one-dimensional range spanned by unit vectors $\psi$ such that $x\varphi=\psi(\psi,\varphi)$. The transition probability between $x$ and $y$ is $p(x,y)=\text{tr}(xy)$. For any states, not necessarily pure, described by density matrices $\rho,\sigma$, the transition probability between them is the fidelity \cite{Uhl76}
\[P(\rho,\sigma)=\Bigl[\text{tr}\sqrt{\rho^{1/2}\sigma\rho^{1/2}}\Bigr]^2.\]
By observing the properties of these transition probabilities, we can define the transition probability space.
\begin{definition}\label{def21}
A transition probability space $(S,p)$ is a set of states $S$ with a real-valued transition probability $p(a,b),a,b\in S$ such that
\begin{enumerate}
\item $0\leqslant p(a,b)\leqslant 1$ and $p(a,b)=1\Leftrightarrow a=b$;
\item $p(a,b)=p(b,a)$;
\item Two states $a,b\in S$ are orthogonal if $p(a,b)=0$. A maximal subset $R\subset S$ where each distinct pair of states are orthogonal is a basis of $S$. Then \[\sum_{r\in R}p(a,r)=1,\quad\forall R\subset S,~\forall a\in S.\]
\end{enumerate}
\end{definition}
From condition (3) it is easy to see that any two bases of $S$ have the same cardinality which is just the dimension of $S$. The following basic properties resemble those of a vector space. A subspace $(S',p')$ of $(S,p)$ is such that $S'\subset S$ and $p'=p\restriction_{S'}$. Two transition probability spaces $(S_1,p_1)$ and $(S_2,p_2)$ are isomorphic if there is an isomorphism $x\rightarrow x'$ of $S_1$ to $S_2$ such that $p_1(x,y)=p_2(x',y')$. $(S_1,p_1)$ can be embedded into $(S_2,p_2)$ iff the latter contains a subspace isomorphic to $(S_1,p_1)$. The orthocomplement $A^\bot$ of a subspace $A\triangleq(S_1,p_1)$ of a transition probability space $(S_2,p_2)$ is the maximal subspace of $(S_2,p_2)$ whose states are orthogonal to $A$ and $A\cap A^\bot=\emptyset$. A subspace $A$ is orthoclosed if $A=A^{\bot\bot}$. One can easily see that an orthoclosed subspace $A$ is the orthocomplement of any basis of $A^\bot$. Moreover, the orthoclosed subspaces form an atomic orthomodular poset \cite{Bel76}. With these operations, one can indeed reconstruct the quantum logic from a transition probability space (cf. Refs. \cite{Zab75,Del84}\footnote{Note that in Ref. \cite{Del84} the symmetry condition (2) is not required. However, if one wants to represent the quantum logic on a Hilbert space, the symmetry becomes essential to prove Piron's representation theorem. Cf. Ref. \cite{Zab75} and also Ref. \cite{Bug74}.}).
Furthermore, as for vector spaces, one can construct disjoint unions (or direct sums) of $(S_1,p_1)$ and $(S_2,p_2)$ by setting $p(a,b)=0$ if $a\in S_1,b\in S_2$ or $a\in S_2,b\in S_1$ and $p(a,b)=p_i(a,b),i=1,2$ if $a,b$ belong to the same $S_i$. Then a transition probability space is irreducible if it is not the disjoint union  of two nonempty orthogonal subspaces. Every transition probability space is the disjoint union of irreducible subspaces (or superselection sectors).

In a transition probability space, one can also define a topology by introducing the  metric
\[d(a,b)=\sup_{c\in S}\bigl|p(a,c)-p(b,c)\bigr|.\]
Then $p(a,b)$ is jointly continuous in this metric topology. For arbitrary mixed states this definition of distance still holds, for which cf. Ref. \cite{MZC09}, with a change from state spaces to transition probability spaces.

An example of the transition probability space is the Hilbertian space $(S_{\mathcal{H}},p)$ where $S_{\mathcal{H}}$ is the set of all the rays $\Psi$ in a Hilbert space $\mathcal{H}$ and the transition probability between two rays $\Psi_i=e^{i\alpha_i}\psi_i,i=1,2$ is
\[p(\Psi_1,\Psi_2)=|\braket{\psi_1,\psi_2}|^2.\]
The metric can be witten in terms of the one-dimensional projectors as
\[d(\Psi_1,\Psi_2)\equiv d(x_1,x_2)=\sup_y\text{tr}[(x_1-x_2)y]=||x_1-x_2||_{\text{max}}=\sqrt{1-|\braket{\psi_1,\psi_2}|^2}\]
where in the last step we have used the relations $(x-y)^3=[1-p(x,y)](x-y)$ and $1-p(x,y)\leqslant d(x,y)$.
A subspace of $S_{\mathcal{H}}$ corresponds to a closed vector subspace of $\mathcal{H}$, which, however, does not hold in a general transition probability space. Now an important question is whether there is a representation theorem like Piron's that the transition probability space can be represented on a Hilbert space. The answer is affirmative \cite{Pul86} and the proof uses generalizations of superposition principles in quantum logics to transition probability spaces.

\subsection{Generalized transition probability spaces}
More generally, for a set $X$ with measure $\mu(X)=n$, the subsets $s\subset X$ with measure $\mu(s)=1$ form  a state space with the transition probability $p(s_1,s_2)=\mu(s_1\cap s_2)$. Following Mackey \cite{Mack63}, one can choose the  measure to be the probability measure $d\alpha_A$ on a Borel set $E$ of $\mathbb{R}$ such that $\int_Ed\alpha_A=p(A,\alpha,E)$ is the probability (amplitude) of the measurement of the observable  $A$ on the state $\alpha$ with an outcome taking value in $E$.\footnote{In the present case we use Greek letters $\alpha,\beta...$ to denote the quantum states, so as to emphasize that their measures are $d\alpha_A,d\beta_A...$.} For any pair of states $\alpha,\beta$ under a measurement of $A$, one has the transition measure (the Kakutani distance)
\[\int_Ed\sqrt{\alpha_A\beta_A}=\int_E\sqrt{\frac{d\alpha_A}{d\sigma}\frac{\beta_A}{d\sigma}}d\sigma\]
where $d\alpha/d\sigma,d\beta/d\sigma$ are the Radon-Nikodym derivatives with respect to the finite measure $\sigma$. This transition probability is generalized by Cantoni \cite{Can75}.
\begin{definition}
For any pair of states $\alpha,\beta$ under a measurement of $A$, let
\[T_A(\alpha,\beta)=\Bigl[\int_{\mathbb{R}}d\sqrt{\alpha_A,\beta_A}\Bigr]^2.\]
A generalized transition probablity from $\alpha$ to $\beta$ is
\[T(\alpha,\beta)=\inf_{A}T_A(\alpha,\beta).\]
\end{definition}
One can check that the $T(\alpha,\beta)$ satisfy the condtions (1) and (2) in definition \ref{def21}. As to the condition (3), we already have $T(\alpha,\beta)=\sup\bigl|(\alpha,\beta)^2\bigr|\equiv\mathcal{P}(\alpha,\beta)$ \cite{AR82}. Then by the concavity of $\mathcal{P}$ \cite{Uhl76}, we have
\[\sum_{\beta\in R}T(\alpha,\beta)=N\sum_{\beta\in R}\frac{1}{N}\mathcal{P}(\alpha,\beta)\leqslant N\mathcal{P}(\alpha,\frac{1}{N}\sum_{\beta\in R}\beta)=1,\]
where we have assumed, without loss of generality, that each state $\beta$ has equal probability $1/N$. The last equality comes from the fact that $R$ covers $S$ by definition and hence we have chosen $\sum_{\beta\in R} \beta=S$ by Zorn's lemma. On the other hand, since $0\leqslant T(\alpha,\beta)\leqslant 1$ and $T(\alpha,\beta)=1$ when $\alpha=\beta$, consider the special case where there is a $\beta\in R$ such that $\alpha=\beta$, then obviously
\[\sum_{\beta\in R}T(\alpha,\beta)=1+\sum_{\beta\in R,\beta\neq\alpha}T(\alpha,\beta)\geqslant 1.\]
Therefore, we must have $\sum_{\beta\in R}T(\alpha,\beta)=1$ and the $T(\alpha,\beta)$ together with the state space form a {\it generalized} transition probability space.

Furthermore, the generalized transition probability $T(\alpha,\beta)$ can be used to define a metric between pure states (the Bures distance)
\[d(\alpha,\beta)=\sqrt{2(1-\sqrt{T(\alpha,\beta)})}.\]
A space of generalized transition probabilities can be set in motion. Indeed, the action on a state $\alpha$ of an element $m$ in a mobility semigroup $\mathcal{M}$ \cite{Can82} maps the state to a fictitious state $m\alpha$ at later time. The generalized transition probability $T(\alpha,\beta)$ is nondecreasing under $m$,
\[T(\alpha,\beta)\leqslant T(m\alpha,m\beta),\quad \text{and hence}~d(\alpha,\beta)\geqslant d(m\alpha,n\beta).\]
The equality holds when $m$ represents a unitary evolution, which becomes a compatibility requirement when one wants to equip a transition probability space with a Poisson structure \cite{Lan97}. In particular, given a Poisson manifold $P$ and a space of quantum pure states $S$, one can obtain the classical limit of QM by searching for a {\it classical germ} \cite{Lan96}, i.e. a family of injections $q_\hbar:P\rightarrow S$ such that 
\[\lim_{\hbar\rightarrow0}p(q_\hbar(\chi),q_\hbar(\zeta))=\delta_{\chi,\zeta}\]
where $\chi,\zeta$ correspond to the classical (Liouville, if $P$ is symplectic) measures on $P$.

An alternative form of $T(\alpha,\beta)$ will be useful. In the quantum logic $L$, an observable $A_i$ corresponds to a proposition $a_i$ and hence the state $\alpha_{A_i}$ is mapped to the probability measure $\alpha_{a_i}$ on $[0,1]\subset\mathbb{R}$. A sequence of $a_i$ in $L$ is a maximal orthogonal sequence if they are mutually orthogonal and $\cup_ia_i=1$. Since $\mathbb{R}$ can be partitioned by Borel sets $E_i$ as $\mathbb{R}=\cup_iE_i$, we can therefore write  $T(\alpha,\beta)$ as
\[ T(\alpha,\beta)=\Bigl[\inf_{\{a_i\}}\sum_i\sqrt{\alpha_{a_i}\beta_{a_i}}\Bigr]^2.\]
In fact, one can show that \cite{Pul89}
\[ T(\alpha,\beta)=\Bigl[\inf_{R}\sum_{\gamma\in R}\sqrt{p(\alpha,\gamma)p(\beta,\gamma)}\Bigr]^2.\]
where the $p$ are the transition probabilties defined in definition \ref{def21} and the infimum is taken over all basis $R$ of $S$.

\subsection{Transition amplitude spaces}
The generalized transition probability is not the end of the story. In QM  the more computable and more intuitive quantities are the transition amplitudes, the generating functionals for path integrals. Hence, in addition to the transition probability spaces, one can also define the transition ampiltude spaces \cite{GP87}.  In later sections of this paper, the transition amplitude is the starting point of most discussions.

Consider the map $A: S\times S\rightarrow\mathbb{C}$, where $S$ is still the space of pure states, such that for two $a,b\in S,a\neq b$ to be orthogonal it only requires that $A(a,b)=0$. 
\begin{definition}\label{def23}
An $M\subset S$ is an $A$-set if the following conditions are satisfied,
\[\sum_{c\in M}|A(a,c)A(c,b)|<\infty,\quad A(a,b)=\sum_{c\in M}A(a,c)A(c,b),\quad\forall a,b\in S.\]
Let $\mathcal{N}_A$ be the class of $A$-sets.
The map $A$ given above is a transition amplitude if (i) $\mathcal{N}_A\neq0$; (ii) $A(a,a)=1,\forall a\in S$. The space $(S,A)$ is then a transition amplitude space.
\end{definition}
A transition amplitude space is strong  if $A(a,b)=1$ implies $a=b$, and total if $\mathcal{N}_A=R$, a basis of $S$. One can readily check that for a total transition amplitude space $(S,A)$, $|A(a,b)|^2$ is a transition probability and $(S,|A(a,b)|^2)$ forms a transition probability space.

As an example of transition amplitude spaces, consider again the Hilbertian space $(S_{\mathcal{H}},p)$ with the transition amplitude defined by $A(a,b)=\braket{\psi(a),\psi(b)}=\sqrt{p(a,b)}$. Amplitude functions $A_a:S\rightarrow\mathbb{C}$ such that $A_a(b)=A(a,b)$ correspond to wave functions of QM which form a complex Hilbert space. In fact, the transition amplitude spaces can also be related to the algebraic approach to QM.  Given a transition amplitude space $(S,A)$, a general map $B:S\times S\rightarrow\mathbb{C}$ is called a form. A sequence $(\mathfrak{a}_i,a_i)\in\mathbb{C}\times S$ is a null sequence $n_0(A)$ if 
\[\sum_i|\mathfrak{a}_iA(a_i,b)|<\infty,\quad\sum_i\mathfrak{a}_iA(a_i,b)=0,\quad\forall b\in S.\]
A form $B$ is an $A$-form if 
\[\sum_i\mathfrak{a}_iB(a_i,b)=\sum_i\mathfrak{a}_iB(a,b_i)=0,\quad\forall b\in S,~(\mathfrak{a}_i,a_i)\in n_0(A).\]
A form $B$ is bounded if there exists a $\mathfrak{b}>0$ such that, 
\[\sum_{c\in M}\Bigl|\sum_i\mathfrak{a}_iB(c_i,c)\Bigr|^2\leqslant \mathfrak{b}^2\sum_i|\mathfrak{a}_i|^2,\quad \forall M\in\mathcal{N}_A,~(\mathfrak{a}_i,c_i)\in\mathbb{C}\times M.\]
In the set $\mathcal{B}(S,A)$ of bounded $A$-forms, which is obviously a complex linear space under pointwise additions and scalar multiplications, one can define the involution $B^*(a,b)$ by complex conjugation, the norm $||B||$ by the infimum of the bound $\mathfrak{b}$, and the product by
\[BC(a,b)=\sum_{c\in M}B(c,b)C(a,c),\quad\forall B,C\in \mathcal{B}(S,A),~M\in\mathcal{N}_A.\]
Then  $\mathcal{B}(S,A)$ is a unital $C^*$-algebra with identity $A$. There exists an isometric $C^*$-isomorphism from  $\mathcal{B}(S,A)$  to that of bounded linear operators on the Hilbert space $\mathcal{H}$ such that $B(a,b)=\braket{\hat{B}\psi(a),\psi(b)}$, where $\mathcal{H}$ is the GNS Hilbert space with its states represented by the  positive linear functional $B_A:\mathcal{B}(S,A)\rightarrow\mathbb{R}$, or the real\footnote{The reality is due to the $A$-form condition.} amplitude functions.

\subsection{A view from operational quantum theory}\label{sec24}
We have seen that the symmetry condition of the transition probability can be relaxed if one only wants to recover the formal Born's rule of QM. However, on the one hand, the Hilbert space representation theorem requires the symmetry \cite{Zab75,Bug74}, and on the other hand, due to the timeless feature of quantum gravity the time reversal symmetry should be preserved at the level of transition probabilities.  In fact, recent results \cite{OC15,OC16} have shown that such a time reversal symmetry is permitted in a generalized formulation of operational probabilistic theories. We seek the relation between these formulations, which will be useful in the next section.

The operational formulation of QM given in Refs. \cite{OC15,OC16} consists of the following ingredients:
\begin{enumerate}
\item{\it Operations or vertices}. Each vertex is associated with an operation defined as a set of events $\{M_i^{\mathcal{I}\mathcal{O}}\}$ where the superscripts $\mathcal{I}$ and $\mathcal{O}$ represent the boundary systems within the Hilbert space $\prod_{a\in\mathcal{I}}\mathcal{H}_a\otimes\prod_{b\in\mathcal{O}}\mathcal{H}_b$,\footnote{In the time asymmetric case, $\mathcal{I}$ and $\mathcal{O}$ are respectively interpreted as {\it input} and {\it output} states.} the subscript denotes the possible values of outcomes, and the $M$ are positive semidefinite operators acting on the boundary Hilbert space.
\item{\it Wires}. Two vertices are connected by a wire representing the entangled states $W$ between the boundary systems. The composition of two operators is obtained by tracing over the entangled states between these two operations.
\item{\it Networks}. The vertices connected by wires form a network. The joint probability for this network is the probability conditioned on all the ``input" states
\[P=\frac{\text{tr}\bigl[W(\otimes M_i)\bigr]}{\text{tr}\bigl[W(\otimes\sum_i M_i)\bigr]}\]
where the indices for boundary systems are omitted for simplicity.
\end{enumerate}
Now for the formulation in terms of transition probabilty spaces, (i) the operators in operations can be chosen as the projectors $P_{\mathcal{S}}$ to the subspaces $\mathcal{S}$ of the total Hilbert space $\mathcal{H}$ so that at each vertex there is a class of states (with different phases); (ii) the wires are defined by tracing over the intermediate states $W$, which can be encoded into the summation condition of Definition \ref{def23}; (iii) the probabilty now becomes the conditional probability of finding states in the subspace $\mathcal{S}$ provided the subspace $\mathcal{S}'$,
\[P=\frac{\text{tr}\bigl[WP_{\mathcal{S}}\bigr]}{\text{tr}\bigl[WP_{\mathcal{S}'}\bigr]}.\]
If $\mathcal{S}\subset\mathcal{S}'$, one can readily check that this probability is the transition probability from states in $\mathcal{S}'$ to those in $\mathcal{S}$. Thus, the transition probability spaces are consistent with the operational formulation of QM of Refs. \cite{OC15,OC16}. This form of transition probability also shows that the expectation value of an operator in the Hilbert space language is now a transition probablity with the final state (space) being projected out by this operator.

Besides, it has been observed in Categorical Quantum Mechanics \cite{CoeckeQP} that such an operational theory equipped with compositions of operations by connecting the wires and with tensor products between operations/wires can be described by a monoidal category. For transition probability spaces, the (tensor) product $\otimes$ is definable for all the state spaces via  the disjoint union $\sqcup$ (or the coproduct $\coprod$, categorically speaking), and hence two wires can be braided making it a braided monoidal category. If two state spaces can be related by a transition, they constitute a pair of boundary systems as above. Then  the symmetry condition for transition probabilities indicates the existence of transposing of a pair of boundary systems, which is an Hermitian conjugate $\dagger$ in the tensor product Hilbert space. Hence one obtains a dagger (or rigid) monoidal category. 
\section{Transition Probability Spaces for Loop Quantum Gravity}\label{sec3}
The transition probability spaces reviewed in the last section are designed for QM only. In order to study quantum gravity one has to generalize them in two directions: the first is to include general covariance; the second is to incorporate the formalism of quantum field theories (QFT). 

We first consider the general covariant QM in both canonical and covariant formulations. Then we proceed to the general boundary formulation which is not only suitable for describing QFT but also for QM and quantum gravity. The EPRL spin foam model, defined on a given 2-complex,  can be put into a simplified general boundary formulation so that the structure of transition probability space will become manifest. Here is a conceptual conflict that for the spin foam models on 2-complexes the general covariance is not as applicable as in the continuum limit.  Here we should point out that the above two generations are {\it separate} ones and the goal is to show the structures of transition probability spaces, instead of fully characterize spin foams in terms of the general boundary formulation. These two generalizations will be undoubtfully of ultimate concern for quantum gravity, though.

In the rest two subsections, the canonical LQG Hilbert space in the discrete form is reconstructed from spin foam amplitudes and the possibility to derive quantum logics from the transition probability spaces of spin foams is pointed out.

\subsection{General covariant quantum theory I: spacetime states}
In the general covariant quantum theory proposed by Reisenberger\&Rovelli \cite{RR02}, the basic concepts are the spacetime-smeared representation of quantum states and the propagators (or transition amplitudes) between these spacetime states. Here the spacetime smearing means that both spatial and temporal measurements in realistic experiments are impossible to be infinitely sharp and need to be smeared. And by general covariance we mean that the spacetime is not fixed and can change with the spacetime smeared states through the smearing functions.

Consider a time-dependent quantum state $\ket{X,T}$ in some Hilbert space $\mathcal{H}$, where $X$ stands for spatial coordinates and $T$ for the time coordinate. Then  any such state can be transformed to another through the propagators $W$  as
\[\ket{X,T}=\int dX'dT'\braket{X',T'|X,T}\ket{X',T'}\equiv\int dX'dT'W(X',T';X,T)\ket{X',T'}.\]
In general, the following function is also a wavefunction of QM,
\[\Psi_f(X,T)\equiv\braket{X,T|\Psi_f}=\int dX'dT'W(X,T;X',T')f(X',T')\]
where $f(X,T)$ is a smooth function depending on spacetime. This is because by Riesz's representation theorem, there always exists a $\ket{\Phi}\in\mathcal{H}$ such that $f(X,T)=\braket{X,T|\Phi}$. Therefore, we can define
\begin{definition}[Reisenberger\&Rovelli \cite{RR02}]
Given a spacetime dependent smearing function $f(X,T)$ and a $\ket{X,T}\in\mathcal{H}$, the spacetime-smeared representation of $\ket{X,T}$ is
\[\ket{f}\equiv\int dXdTf(X,T)\ket{X,T}.\]
In particular, for a small (compared to the scale in question) spacetime region $\mathcal{R}$, the spacetime-smeared representation of a quantum state supported on $\mathcal{R}$ (or simply a spacetime state on $\mathcal{R}$) is
\[\ket{\mathcal{R}}\equiv C_\mathcal{R}\int_\mathcal{R} dXdT\ket{X,T},\quad  C_\mathcal{R}=\Bigl(\int_\mathcal{R}dXdT\int_\mathcal{R}dX'dT'W(X,T;X',T')\Bigr)^{-1/2}.\]
\end{definition}
The Hilbert space $\mathcal{H}_f$ of the spacetime-smeared states $\ket{f}$ has a simple inner product
\[\braket{f|f'}=\int dXdT\int dX'dT'f^*(X,T)W(X,T;X'T')f(X',T').\]
On the other hand, the spacetime states $\ket{\mathcal{R}}=C_\mathcal{R}\ket{f_\mathcal{R}}$, where $f_\mathcal{R}$ is a characteristic function of the region $\mathcal{R}$, will give two inner products. The first one is the inner product between a spacetime state  and a particle state $\braket{\mathcal{R}|\Psi}$, then $|\braket{\mathcal{R}|\Psi}|^2$ is the probability of finding the particle state $\ket{\Psi}$ in the region $\mathcal{R}$. The second is the transition amplitude between spacetime states
\[\braket{\mathcal{R}|\mathcal{R}'}=C_\mathcal{R}C_{\mathcal{R}'}\int_\mathcal{R}dx\int_{\mathcal{R}'}dyW(x;y),\quad x\equiv(X,T),~y\equiv(X',T').\]
That this is indeed a transition amplitude can be easily checked as in the following lemma.
\begin{lemma}
Given two spacetime states $\ket{\mathcal{R}},\ket{\mathcal{R}'}$ on the regions $\mathcal{R},\mathcal{R}'$ respectively, the inner product $\braket{\mathcal{R}|\mathcal{R}'}$ defines a transition amplitude $A(f_\mathcal{R},f_{\mathcal{R}'})$.
\end{lemma}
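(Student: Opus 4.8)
The plan is to realize $\braket{\mathcal{R}|\mathcal{R}'}$ as the restriction, to the spacetime states, of the Hilbertian transition amplitude $A(a,b)=\braket{\psi(a)|\psi(b)}$ exhibited right after Definition~\ref{def23}. First I would make precise the Hilbert space in which the spacetime states live. The sesquilinear form $\braket{f|f'}=\int dX\,dT\int dX'\,dT'\,f^*(X,T)\,W(X,T;X',T')\,f'(X',T')$ on smearing functions is positive semi-definite — this is part of the Reisenberger--Rovelli setup, in which $W$ is (the kernel of) the projector onto the solutions of the constraints — so quotienting by its radical and completing yields a genuine Hilbert space $\mathcal{H}_f$ into which every $\ket{f}$ maps; equivalently, the Riesz identification $f\mapsto\ket{\Phi_f}$ with $f(X,T)=\braket{X,T|\Phi_f}$, already invoked in the excerpt to see that $\Psi_f$ is a legitimate wavefunction, embeds $\mathcal{H}_f$ isometrically into $\mathcal{H}$.

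Next, take $S$ to be the Hilbertian space $(S_{\mathcal{H}_f},p)$, i.e.\ the set of unit rays of $\mathcal{H}_f$; it contains every spacetime state, because $C_\mathcal{R}=\bigl(\int_\mathcal{R}dX\,dT\int_\mathcal{R}dX'\,dT'\,W(X,T;X',T')\bigr)^{-1/2}$ is chosen precisely so that $\braket{\mathcal{R}|\mathcal{R}}=C_\mathcal{R}^2\int_\mathcal{R}\int_\mathcal{R}W=1$ (for $\mathcal{R}$ small the double integral is finite and strictly positive, so $C_\mathcal{R}$ exists and is nonzero). Define $A(f_\mathcal{R},f_{\mathcal{R}'}):=\braket{\mathcal{R}|\mathcal{R}'}=C_\mathcal{R}C_{\mathcal{R}'}\int_\mathcal{R}dx\int_{\mathcal{R}'}dy\,W(x;y)$. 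Condition~(ii) of Definition~\ref{def23} is exactly the normalization just recorded. For condition~(i), let $M$ be a unit-ray orthonormal basis $\{c\}$ of $\mathcal{H}_f$ (with $\psi(c)=\ket{c}$); then $\sum_{c\in M}|A(a,c)A(c,b)|\le\|\psi(a)\|\,\|\psi(b)\|=1$ by Cauchy--Schwarz (Bessel's inequality), while the completeness relation $\sum_{c\in M}\ket{c}\bra{c}=\mathrm{Id}$ gives $A(a,b)=\sum_{c\in M}\braket{\psi(a)|c}\braket{c|\psi(b)}=\sum_{c\in M}A(a,c)A(c,b)$ for all $a,b\in S$; hence $M\in\mathcal{N}_A$ and $\mathcal{N}_A\neq\emptyset$. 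Therefore $(S,A)$ is a transition amplitude space and $\braket{\mathcal{R}|\mathcal{R}'}$ is the asserted transition amplitude.

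The only genuine work is the first step: one must know the form with kernel $W$ is positive semi-definite — so that $\mathcal{H}_f$ really is a Hilbert space and the spacetime states really are unit vectors — and that each $C_\mathcal{R}$ is well-defined; both are built into the Reisenberger--Rovelli construction and the smallness hypothesis on $\mathcal{R}$, and I expect this is where any friction lies. A minor, purely cosmetic point is that Definition~\ref{def23} requires the $A$-set $M$ to sit inside $S$: this is why I take $S$ to be the full Hilbertian space rather than just the family of spacetime states; if one insisted on the smaller $S$ one would additionally have to extract an orthonormal basis consisting of spacetime states, in which case one would moreover obtain a \emph{total} transition amplitude space in the sense noted after Definition~\ref{def23}.
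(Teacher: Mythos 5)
Your proposal is correct, but it takes a genuinely different route from the paper. The paper never passes through the Hilbertian example: it verifies the two conditions of Definition~\ref{def23} directly on the family of spacetime states, using the reproducing (projector) property of the propagator, $W(x;y)=\int_{\mathcal{R}''}dz\,W(x;z)W(z;y)$, to obtain the composition law $A(f_\mathcal{R},f_{\mathcal{R}'})=\int_{\mathcal{R}''}dz\,A(f_\mathcal{R},f_{\mathcal{R}''})A(f_{\mathcal{R}''},f_{\mathcal{R}'})$, and the (assumed) bound $0\leqslant W\leqslant 1$ to get absolute summability; the resulting $A$-set is the family $\{\ket{\mathcal{R}''}\}$ of spacetime states themselves. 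You instead embed the spacetime states into the unit rays of the completed Hilbert space $\mathcal{H}_f$ and invoke the completeness relation for an orthonormal basis together with Bessel/Parseval. Your argument is functionally cleaner and more rigorous — it replaces the paper's formal manipulation (in which the integrand $A(f_\mathcal{R},f_{\mathcal{R}''})A(f_{\mathcal{R}''},f_{\mathcal{R}'})$ does not actually depend on the integration variable $z$, so the "insertion" step is only schematic) by an honest resolution of the identity, at the cost of requiring positive semi-definiteness of the form with kernel $W$, which you correctly identify as the real input from the Reisenberger--Rovelli construction. What the paper's approach buys, and yours defers, is that its $A$-set consists of spacetime states: this is exactly what is exploited in the theorem that follows, where totality ($\mathcal{N}_A=R$ a basis of mutually orthogonal regions $\mathcal{R}_n$) is needed to pass to a transition probability space. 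Your closing caveat acknowledges this; to mesh with the paper's subsequent use of the lemma you would indeed need to extract an orthogonal family of spacetime states, which is precisely what the paper's choice of $A$-set anticipates.
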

\begin{proof}
Denote $\braket{\mathcal{R}|\mathcal{R}'}\triangleq A(f_\mathcal{R},f_{\mathcal{R}'})$. Firstly, it is obvious from the definitions that $A(f_\mathcal{R},f_{\mathcal{R}})=1$. Then
\begin{align*}
&A(f_\mathcal{R},f_{\mathcal{R}'})=C_\mathcal{R}C_{\mathcal{R}'}\int_\mathcal{R}dx\int_{\mathcal{R}'}dyW(x;y)=\\
=&\int_{\mathcal{R}''}dzC_\mathcal{R}C_{\mathcal{R}'}\int_\mathcal{R}dx\int_{\mathcal{R}'}dyW(x;z)W(z;y)=\\
=&\int_{\mathcal{R}''}dzC_\mathcal{R}C_{\mathcal{R}'}\int_\mathcal{R}dx\int_{\mathcal{R}'}dyW(x;z)C_{\mathcal{R}''}C_{\mathcal{R}''}\int_{\mathcal{R}''}dz\int_{\mathcal{R}''}dzW(z;z)W(z;y)=\\
=&\int_{\mathcal{R}''}dz\Bigl(C_\mathcal{R}C_{\mathcal{R}''}\int_\mathcal{R}dx\int_{\mathcal{R}''}dzW(x;z)\Bigr)\Bigl(C_{\mathcal{R}''}C_{\mathcal{R}'}\int_{\mathcal{R}''}dz\int_{\mathcal{R}'}dyW(z;y)\Bigr)=\\
=&\int_{\mathcal{R}''}dzA(f_\mathcal{R},f_{\mathcal{R}''})A(f_{\mathcal{R}''},f_{\mathcal{R}'})
\end{align*}
where in the second line we have used the property of propagators, in the third line inserted $A(f_{\mathcal{R}''},f_{\mathcal{R}''})=1$, and in the fourth line $W(z;z)=1$ by definition. Secondly, since  $0\leqslant W(x,y)\leqslant1$ by definition, we always have $0\leqslant A(f_\mathcal{R},f_{\mathcal{R}'})\leqslant1$, and hence
\[\int_{\mathcal{R}''}dz\lvert A(f_\mathcal{R},f_{\mathcal{R}''})A(f_{\mathcal{R}''},f_{\mathcal{R}'})\rvert=\int_{\mathcal{R}''}dz A(f_\mathcal{R},f_{\mathcal{R}''})A(f_{\mathcal{R}''},f_{\mathcal{R}'})=A(f_\mathcal{R},f_{\mathcal{R}'})\leqslant1.\]
Therefore, the class  $\mathcal{N}_A$ of $A$-sets $\{\ket{\mathcal{R}''}\}$ is nonempty.
\end{proof}
Denote by $\mathcal{H}_{\mathcal{R}}$ the space of spacetime states, then $(\mathcal{H}_{\mathcal{R}},A(f_\mathcal{R},f_{\mathcal{R}'}))$ is a transition amplitude space and the following is immediate.
\begin{theorem}
 $(\mathcal{H}_{\mathcal{R}},\lvert A(f_\mathcal{R},f_{\mathcal{R}'})\rvert^2)$ is a transition probability space.
\end{theorem}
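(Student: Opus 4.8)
The plan is to deduce the statement from the Lemma just proved together with the criterion recorded in the remark following Definition~\ref{def23}: for a \emph{total} transition amplitude space $(S,A)$, the map $|A(a,b)|^2$ is a transition probability and $(S,|A(a,b)|^2)$ is a transition probability space. By the Lemma, $(\mathcal{H}_\mathcal{R},A)$ with $A(f_\mathcal{R},f_{\mathcal{R}'})=\braket{\mathcal{R}|\mathcal{R}'}$ is already a transition amplitude space, so the only thing left to verify is that it is total (and, for the ``only if'' half of condition~(1) in Definition~\ref{def21}, strong). Equivalently, one may simply check conditions (1)--(3) of Definition~\ref{def21} directly for $|A(f_\mathcal{R},f_{\mathcal{R}'})|^2$.

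First I would record the elementary facts that follow from the Hilbert-space origin of $A$. Each spacetime state $\ket{\mathcal{R}}=C_\mathcal{R}\ket{f_\mathcal{R}}$ is a unit vector of $\mathcal{H}$ by the very choice of $C_\mathcal{R}$, so $A(f_\mathcal{R},f_{\mathcal{R}'})=\braket{\mathcal{R}|\mathcal{R}'}$ is the inner product of two unit vectors. Hence $|A(f_\mathcal{R},f_{\mathcal{R}'})|^2\leqslant 1$ by Cauchy--Schwarz (equivalently, by $0\leqslant W\leqslant 1$ as used in the Lemma), with equality exactly when $\ket{\mathcal{R}}$ and $\ket{\mathcal{R}'}$ span the same ray, i.e. define the same state; this gives strongness and condition~(1). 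Condition~(2) is immediate from $\braket{\mathcal{R}|\mathcal{R}'}=\overline{\braket{\mathcal{R}'|\mathcal{R}}}$, so that $|A(f_\mathcal{R},f_{\mathcal{R}'})|^2=|A(f_{\mathcal{R}'},f_\mathcal{R})|^2$.

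The substantive point is condition~(3), equivalently totality. Here ``orthogonal'' means $A(f_\mathcal{R},f_{\mathcal{R}'})=0$, i.e. $\ket{\mathcal{R}}\perp\ket{\mathcal{R}'}$ in $\mathcal{H}$, so a basis $R$ of $\mathcal{H}_\mathcal{R}$ in the sense of Definition~\ref{def21} is a maximal family of mutually orthogonal spacetime states. I would argue that any such family is an orthonormal basis of $\mathcal{H}_\mathcal{R}$, using the resolution of identity implicit in the propagator composition $\int dz\,W(x;z)W(z;y)=W(x;y)$ (that is, $\int dz\,\ket{z}\bra{z}=\mathbf{1}$ on the relevant subspace) together with the fact that the spacetime states built from small regions $\mathcal{R}''$ that partition a neighbourhood form, in the fine-partition limit, precisely such a family. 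Granting this, Parseval's identity applied to the unit vector $\ket{\mathcal{R}}$ gives $\sum_{r\in R}|A(f_\mathcal{R},r)|^2=\sum_{r\in R}|\braket{\mathcal{R}|r}|^2=\braket{\mathcal{R}|\mathcal{R}}=1$, which is condition~(3); simultaneously it shows that the $A$-sets exhibited in the Lemma are bases, so $(\mathcal{H}_\mathcal{R},A)$ is total, and the criterion following Definition~\ref{def23} closes the argument.

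I expect the main obstacle to be exactly this last point: showing that every maximal orthogonal family of spacetime states is a genuine orthonormal basis of $\mathcal{H}_\mathcal{R}$, rather than merely a family that is maximal within the restricted class of vectors of the form $\ket{\mathcal{R}}$. This is a ``richness'' requirement on the supply of spacetime states, and in the continuum it must be phrased through a limit of finer and finer region-decompositions and the overcompleteness of the $\ket{X,T}$. If one takes $\mathcal{H}_\mathcal{R}$ to be literally the closed linear span of the spacetime states and uses that overcompleteness, the argument closes; a cleaner but essentially equivalent route is to observe that $|A(f_\mathcal{R},f_{\mathcal{R}'})|^2=|\braket{\mathcal{R}|\mathcal{R}'}|^2$ is the Hilbertian transition probability $p$ of Section~\ref{sec21} restricted to the rays of the spacetime states, so that conditions~(1) and (2) are inherited verbatim and only condition~(3) needs the above re-examination in the restricted setting.
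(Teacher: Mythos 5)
Your proposal follows essentially the same route as the paper: establish totality of the transition amplitude space $(\mathcal{H}_{\mathcal{R}},A)$ first, note that conditions (1) and (2) of Definition~\ref{def21} follow from the Hilbert-space origin of $A$, and then obtain condition (3) from the composition property $\sum_{\mathcal{R}'\in\mathcal{N}_A}A(f_\mathcal{R},f_{\mathcal{R}'})A(f_{\mathcal{R}'},f_{\mathcal{R}})=A(f_\mathcal{R},f_{\mathcal{R}})=1$ together with $A(f_\mathcal{R},f_{\mathcal{R}'})=A^*(f_{\mathcal{R}'},f_{\mathcal{R}})$, which is exactly your Parseval identity in different notation. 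The only difference is one of emphasis: where the paper secures totality by exhibiting a family of mutually disjoint regions $\mathcal{R}_n$ and invoking Zorn's lemma, you flag explicitly the ``richness'' issue of whether a maximal orthogonal family of spacetime states is a genuine orthonormal basis of $\mathcal{H}_{\mathcal{R}}$ --- a point the paper passes over quickly and which your fine-partition argument addresses more carefully.
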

\begin{proof}
First we need to show that $(\mathcal{H}_{\mathcal{R}},A(f_\mathcal{R},f_{\mathcal{R}'}))$ is total. This can be achieved, as in the prescription for the time arrival problem given in Ref. \cite{RR02}, by choosing a sequence of spactime regions $\mathcal{R}_n=[X_n,X_n+n\epsilon]\times[T_n,T_n+n\xi]$ with finite spacetime resolution $\epsilon,\xi$ which are obviously mutually orthogonal. By Zorn's lemma we have $\mathcal{N}_A=R$. Now conditions (1) and (2) of Definition \ref{def21} are clearly satisfied.\footnote{In condition (2), since the modulus square  makes the probability spacetime symmetric, the problem of {\it retrodictions} is avoided at the level of probabilities.} Finally, since $(\mathcal{H}_{\mathcal{R}},A(f_\mathcal{R},f_{\mathcal{R}'}))$ is total, 
\[\sum_{\mathcal{R}'\in R}\lvert A(f_\mathcal{R},f_{\mathcal{R}'})\rvert^2=\sum_{\mathcal{R}'\in\mathcal{N}_A}\lvert A(f_\mathcal{R},f_{\mathcal{R}'})\rvert^2=\sum_{\mathcal{R}'\in\mathcal{N}_A} A(f_\mathcal{R},f_{\mathcal{R}'})A(f_{\mathcal{R}'},f_{\mathcal{R}})=1\]
where we have used the fact that $A(f_\mathcal{R},f_{\mathcal{R}'})=A^*(f_{\mathcal{R}'},f_{\mathcal{R}})$ which again comes from the property of $W$. Hence the condition (3) is also satisfied.
\end{proof}

For spacetime states $\ket{\mathcal{R}}$ the smearing function is just the characteristic function $\chi_{\mathcal{R}}$ of region $\mathcal{R}$. In analogy to the definition of  generalized transition probabilities, one can understand the probability $|\braket{\mathcal{R}|\Psi}|^2$ defined by the former type of inner product as follows. The inner product $\braket{\mathcal{R}|\Psi}$ can be decomposed as
\[\braket{\mathcal{R}|\Psi}=\int_\mathcal{R}dXdTC_\mathcal{R}\chi_\mathcal{R}\braket{X,T|\Psi}=\Bigl(\int_\mathcal{R}C_\mathcal{R}\Bigr)\Bigl(\chi_\mathcal{R}dXdT\Psi(X,T)\Bigr)\equiv\int_{C_\mathcal{R}}d\mathcal{R}_\Psi,\]
which represents the probability (amplitude) of finding the particle wavefunction $\Psi$ in the spacetime state localized on the small region $\mathcal{R}$ taking value in interval normalized by $C_\mathcal{R}$. Therefore, a generalized transition probability for spacetime states can be analogously defined by taking the infimum of the transition measures over the particle states $\Psi$.\footnote{In contrast to nonrelativistic QM, we have used the particle state $\Psi$ instead of an operator to study the observation. This is in fact a feature of transition probability spaces where we have shifted from operators acting on states to transition probabilities between states. Indeed, in general covariant QM, the time ordering of multiple measurements is lost and instead, it is encoded into the relations between those measurement devices (see, e.g. Ref. \cite{HMPR07}). So even if one has the operators representing those devices, one still need to know the relations between them to avoid paradoxes. Hence such a shift is suitable for general covariant QM.}

One might face the complication that in addition to the transitions between spactime states, there will be transitions between particle states localized either in the same region or in distinct regions. However, in the former case, the spacetime dependence can be discarded, hence neither the states nor the transitions are general covariant and the problem is reduced to that of the nonrelativistic QM. For the same reason, in the latter case, since the spacetime regions are distinct, transition bewteen them are prohibited if general covariance is assured.

Now there are three levels of  structures. The first is the space of smearing functions, $\mathcal{E}=\{f\}$, on which one can define a bilinear form by group averaging \cite{RR02}
\[(f',f)_C=\int d\tau\int dx[f'(x)]^*e^{i\tau C}f(x)\]
where $C$ is the constraint on a presymplectic manifold. The second is the Hilbert space $\mathcal{H}_f$ of the spacetime-smeared states $\ket{f}$, the inner product of which can be obtained from $(f,f')_C$ through the propagator $W$, 
\[(f',f)_C=\int dx'\int dx[f'(x)]^*W(x',x)f(x)=\braket{f'|f}.\]
These actually form a Gelfand triple $\mathcal{E}\subset\mathcal{H}\subset\mathcal{H}_f$ where $\mathcal{H}$ is the Hibert space of unsmeared states. The general covariant relativistic QM at this level is well-accepted and standardized in Rovelli's book \cite{Rov04}.
The third is  the space $\mathcal{H}_{\mathcal{R}}$ of spacetime states, whose inner product $\braket{\mathcal{R}|\mathcal{R}'}$ makes it a transition amplitude space. Although we have worked with this older (non-standard) version of covariant QM, they are  equivalent to each other in the categorical sense. More explicitly, take  $\mathcal{E}$ as a subcategory of the category {\bf Set} consisting of small sets (of $f$) and funcions (or inner products) between them, 
and suppose, for the moment (c.f. Sec. \ref{sec4}  for a categorification), that the state subspaces in the Hilbert space $\mathcal{H}_f$ (or $\mathcal{H}_{\mathcal{R}}$) together with the transition probabilities form a category {\bf Hf} (or {\bf Hr}). Then consider the following diagram,
\begin{diagram}
{\bf Hr}        &           &        &\\
\dTo_{i} &\rdTo^u  \rdTo(4,2)^v    &        &\\      
{\bf Hr}        &\rTo^{s}   &{\bf Hf}    & \rTo^{a}   &{\bf Set} 
\end{diagram}
where $i$ is the identity functor, $s: {\bf Hr} \rightarrow{\bf Hf} $ is the functor that effectively forgets the dependence on the time coordinate, and $a:{\bf Hf} \rightarrow{\bf Set} $ is a representable functor. Note that by the Yoneda lemma the right Kan extension $v\equiv\text{Ran}_i(a\circ s)$ of the largest triangle diagram is valid, hence  this diagram is commutative and defines a point-wise right Kan extension $u\equiv\text{Ran}_i( s)$. This $u$ is the limit of $(\{\ket{\mathcal{R}}\}\downarrow{\bf Hr})\rightarrow{\bf Hr} \rightarrow{\bf Hf}$, and since $i$ is just the identity functor and hence fully faithful, the Kan extension is universal and defines a natural isomorphism between pair of functors $\epsilon:u\circ i=u\cong s$, which simply means that
the categories $ {\bf Hr}$ and ${\bf Hf} $ are categorically equivalent.
\subsection{General covariant quantum theory II: timeless path integrals}
The general covariant QM considered above (or in Ref. \cite{Rov04}) is in a canonical formulation. When passing to the path integral formultion, the timeless feature of relativisitc QM obscures the construction. In Ref. \cite{Chiou13} this difficulty is overcame by group averaging over a time parameter with which the transition amplitude is written as a path integral similar to the case with time. For simplicity, we start directly from this timeless path integral.

Let $\mathcal{C}$ be the configuration space with coordinates $\{q^a\}$. The cotangent space $T^*\mathcal{C}$ of $\mathcal{C}$ is the phase space with coordinates $\{q^a,p_a\}$. On a curve ${\gamma}$  in $T^*\mathcal{C}$ one can define an action $S[\gamma]=\int_\gamma p_adq^a$. A physical motion is obtained by extremizing $S[\gamma]$ over those $\gamma$ on the constraint surface  defined by $H=0$.   In quantum theory the coordinates become partial obervables $\hat{q}^a,\hat{p}_a$ with respective eigenstates $\ket{q^a},\ket{p_a}$. The constraint operator $\hat{H}$ defines a projector $\hat{P}=\int d\tau\exp[-i\tau\hat{H}]$ from a kinematical Hilbert space to the physical Hilbert space.
\begin{definition}[Chiou \cite{Chiou13}]
The transition amplitude $W(q^a,q^{\prime a})=\braket{q^a|\hat{P}|q^{\prime a}}$ from $\ket{q^{\prime a}}$ to $\ket{q^a}$ is the following path integral
\[W(q^a,q^{\prime a})=N\int\mathcal{D}q^a\int\mathcal{D}p_a\delta[H]\exp\Bigl[\frac{i}{\hbar}\int_{\gamma}p_adq^a\Bigr]\]
where $N$ is a normalization factor, $\delta[H]$ confines the curve to the constraint surface such that it connects $q^a$ and $q^{\prime a}$ when restricted to $\mathcal{C}$, and the functional measures are defined as in the ordinary case, (assuming $N$ partitions of $[0,\tau]$)
\[\int\mathcal{D}q^a\triangleq\prod_{n=1}^{N-1}\int d^dq^a_n,\quad \int\mathcal{D}p_a\triangleq\prod_{n=1}^{N-1}\int \frac{d^dp_{na}}{(2\pi\hbar)^d},\quad d=\text{dim}\mathcal{C}.\]
\end{definition}
Now the properties of path integrals manifests the following.
\begin{lemma}
The path integral $W(q^a,q^{\prime a})$ is a transition amplitude.
\end{lemma}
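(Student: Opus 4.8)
The plan is to verify the three defining conditions of a transition amplitude from Definition \ref{def23} for the map $W(q^a,q^{\prime a})=\braket{q^a|\hat P|q^{\prime a}}$, exploiting the fact that $\hat P=\int d\tau\,e^{-i\tau\hat H}$ is (formally) a projector onto the physical Hilbert space and that path integrals compose by gluing. First I would establish the normalization $W(q^a,q^a)=1$: this follows from $\hat P^2=\hat P$ together with the chosen normalization $N$, exactly as $\braket{q^a|\hat P|q^a}$ is the diagonal kernel of the projector, paralleling the role $W(z;z)=1$ played in the previous lemma. Equivalently, at the level of the path integral this is the statement that the empty/trivial path contributes unit weight once $N$ is fixed appropriately.

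Next I would prove the composition (Chapman–Kolmogorov) identity $W(q^a,q^{\prime a})=\int d^dq^{\prime\prime a}\, W(q^a,q^{\prime\prime a})W(q^{\prime\prime a},q^{\prime a})$, which is the heart of the $A$-set condition. On the operator side this is immediate: $\hat P^2=\hat P$ and insertion of the resolution of identity $\int d^dq^{\prime\prime a}\ket{q^{\prime\prime a}}\bra{q^{\prime\prime a}}=\mathbf 1$ on the kinematical Hilbert space give $\braket{q^a|\hat P|q^{\prime a}}=\braket{q^a|\hat P\hat P|q^{\prime a}}=\int d^dq^{\prime\prime a}\braket{q^a|\hat P|q^{\prime\prime a}}\braket{q^{\prime\prime a}|\hat P|q^{\prime a}}$. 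On the path-integral side this is the standard slicing: splitting $\int\mathcal D q^a$ at an intermediate time slice and recognizing that the constraint $\delta[H]$ and the action $\int_\gamma p_a dq^a$ are additive along the concatenation of the two sub-paths. So the configuration-space slice at $q^{\prime\prime a}$ is the relevant $A$-set $M$. I would also note the finiteness/absolute-convergence requirement $\int d^dq^{\prime\prime a}\,|W(q^a,q^{\prime\prime a})W(q^{\prime\prime a},q^{\prime a})|<\infty$ and $\mathcal N_A\neq\emptyset$: granting the formal unitarity/boundedness $|W|\le 1$ inherited from $\hat P$ being a projector (as used in the earlier lemma, where $0\le W\le 1$), the integrand is dominated by $W(q^a,q^{\prime a})$ itself, which is finite, so the intermediate configuration slice qualifies.

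The main obstacle is that all of this is formal: the ``projector'' $\hat P=\int d\tau\,e^{-i\tau\hat H}$ is really a group-averaging map onto a physical Hilbert space with a rigged/distributional inner product, so $\hat P^2=\hat P$ holds only after suitable regularization, and the kernel $W(q^a,q^{\prime a})$ need not be a bounded function pointwise. Thus the honest content of the lemma is that, at the level of rigor of Ref. \cite{Chiou13}, the Chapman–Kolmogorov property of the timeless path integral together with its correct normalization reproduce precisely the axioms of Definition \ref{def23}; I would state the proof at that level, flagging that $\mathcal N_A$ consists of configuration-space slices and that the boundedness hypotheses are inherited from the corresponding (formal) properties of $\hat P$ rather than proved independently. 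A secondary subtlety worth a remark is that condition (ii) $A(a,a)=1$ requires the kinematical states $\ket{q^a}$ to be normalized so that the diagonal of $\hat P$ is unity; if instead one works with improper eigenstates one absorbs the $\delta(0)$-type factor into $N$, which is exactly what the normalization factor in the definition is for.
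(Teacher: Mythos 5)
Your proposal is correct and follows essentially the same route as the paper's own (explicitly ``formal'') demonstration: normalization $W(q^a,q^a)=1$ from the projector acting on normalized physical states, the composition law obtained by summing over intermediate configurations (the paper phrases it as concatenation of paths $\gamma_1\cup\gamma_2$, you phrase it equivalently as $\hat P^2=\hat P$ plus a resolution of the identity), and the admission that convergence is only secured formally via the normalization factor rendering $0\leqslant W\leqslant 1$. Your additional remarks on the distributional nature of $\hat P$ and the role of $N$ in absorbing $\delta(0)$-type factors are more careful than the paper's treatment but do not change the argument.
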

\begin{proof}
We give a formal demonstration here.  First, $W(q^a,q^a)=1$ is manifested from the origianl expression, $\braket{q^a|\hat{P}|q^{ a}}=\braket{q^a|q^{a}}=1$ if $\ket{q^a}$ is a normalized physical state. Second, consider the composition of two paths $\gamma_2\cup\gamma_1$ where $\gamma_1$ connects $q^a$ to $q^{\prime a}$ and $\gamma_2$ connects $q^{\prime a}$ to $q^{\prime\prime a}$ when restricted to $\mathcal{C}$. Then by definition, the sum over all possible middle points $q^{\prime a}$ is equivalent to calculating the path integral from $q^a$ to $q^{\prime\prime a}$, namely,
\[\int_{q^{\prime a}}W(q^a,q^{\prime a})W(q^{\prime a},q^{\prime\prime a})=W(q^a,q^{\prime\prime a})=N'\int\mathcal{D}q^a\int\mathcal{D}p_a\delta[H]\exp\Bigl[\frac{i}{\hbar}\int_{\gamma_1\cup\gamma_2}p_adq^a\Bigr].\]
Now the convergence of the path integral is an elusive issue. In most cases the normalization factor $N$ will render $0\leqslant W\leqslant 1$.
\end{proof}
A mathematically rigorous proof of this lemma will depend on the specific mathematical approach one chooses to define path integrals.  In fact, these properties of $W$ have already been utilized in the last subsection. One can define the probability of finding $\ket{q^a}$ supported on a spacetime region $\mathcal{R}$ given the $\ket{q^{\prime a}}$ on $\mathcal{R}'$ to be
\[P(q^a,\mathcal{R};q^{\prime a},\mathcal{R}')=\Bigl\lvert \frac{W(\mathcal{R},\mathcal{R}')}{\sqrt{W(\mathcal{R},\mathcal{R})}\sqrt{W(\mathcal{R}',\mathcal{R}')}}\Bigr\rvert^2,\]
where
\[W(\mathcal{R},\mathcal{R}')=\int_\mathcal{R}dq^a\int_{\mathcal{R}'}dq^{\prime a}W(q^a,q^{\prime a}).\]
Other than spacetime regions, $\mathcal{R}$ could also be the regions in the spectra of $q^a$,
then we have a  result similar to Theorem 3.3. with a similar proof,
\begin{theorem}
The physical Hilbert space ($\mathcal{H}_f$) projectd out by $\hat{P}$ together with the probability $P(q^a,\mathcal{R};q^{\prime a},\mathcal{R}')$ defines a transition probability space.
\end{theorem}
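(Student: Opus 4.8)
The plan is to mirror the structure of the proof of Theorem 3.3, using the earlier lemma that $W(q^a,q^{\prime a})$ is a transition amplitude. First I would set $A(q^a,q^{\prime a})\triangleq W(\mathcal{R},\mathcal{R}')/(\sqrt{W(\mathcal{R},\mathcal{R})}\sqrt{W(\mathcal{R}',\mathcal{R}')})$ so that $P(q^a,\mathcal{R};q^{\prime a},\mathcal{R}')=|A|^2$, and observe that the composition identity for $W$ proved in the previous lemma, together with the normalization factors, makes $(\mathcal{H}_f,A)$ a transition amplitude space in the sense of Definition \ref{def23}: the normalization gives $A(q^a,\mathcal{R};q^a,\mathcal{R})=1$, and the insertion of a complete set of intermediate states $\ket{q^{\prime\prime a}}$ ranging over a region $\mathcal{R}''$ supplies an $A$-set, exactly as in the covariant-QM lemma. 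The nonemptiness of $\mathcal{N}_A$ follows from $0\leqslant W\leqslant 1$.

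Next I would show the transition amplitude space is total. Following the prescription in Ref.~\cite{Chiou13} (and the analogous step in Theorem 3.3), one chooses a sequence of regions $\mathcal{R}_n$ in the spectrum of the partial observables $\hat{q}^a$ — for instance $\mathcal{R}_n=[q_n^a,q_n^a+n\epsilon]$ with finite resolution $\epsilon$ — which are mutually disjoint and hence orthogonal with respect to $A$, and invokes Zorn's lemma to conclude $\mathcal{N}_A=R$ is a basis of the relevant state space. Conditions (1) and (2) of Definition \ref{def21} for $|A|^2$ are then immediate: boundedness follows from $0\leqslant W\leqslant 1$, and symmetry from $W(q^a,q^{\prime a})=\overline{W(q^{\prime a},q^a)}$, which holds because $\hat{P}$ is (formally) self-adjoint, so $|A(q^a,\mathcal{R};q^{\prime a},\mathcal{R}')|^2=|A(q^{\prime a},\mathcal{R}';q^a,\mathcal{R})|^2$.

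For condition (3), totality gives, for any basis $R$,
\[\sum_{\mathcal{R}'\in R}|A(q^a,\mathcal{R};q^{\prime a},\mathcal{R}')|^2=\sum_{\mathcal{R}'\in\mathcal{N}_A}A(q^a,\mathcal{R};q^{\prime a},\mathcal{R}')A(q^{\prime a},\mathcal{R}';q^a,\mathcal{R})=A(q^a,\mathcal{R};q^a,\mathcal{R})=1,\]
using the composition property of $A$ inherited from $W$ and the conjugation identity. This is the same bookkeeping as in Theorem 3.3, so I would simply say the proof is identical \emph{mutatis mutandis}, replacing $f_\mathcal{R}$ by the spectral regions of $\hat{q}^a$ and $\braket{\mathcal{R}|\mathcal{R}'}$ by $W(\mathcal{R},\mathcal{R}')$.

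The main obstacle is not combinatorial but analytic: everything above is formal because the timeless path integral, the projector $\hat{P}=\int d\tau\,e^{-i\tau\hat H}$, and the normalizability of the resulting physical states are only heuristically defined. In particular the claim $0\leqslant W\leqslant 1$ and the existence of a normalized physical state $\ket{q^a}$ with $W(q^a,q^a)=1$ presuppose that the group averaging produces genuine (rather than distributional) states in $\mathcal{H}_f$; this is exactly the delicate point flagged in the previous lemma. I would therefore present the argument as a formal demonstration parallel to that lemma, noting that a rigorous version depends on the chosen construction of the path integral and of the rigged-Hilbert-space structure $\mathcal{E}\subset\mathcal{H}\subset\mathcal{H}_f$, and that on the region where $\hat P$ does yield normalizable states the above steps go through verbatim.
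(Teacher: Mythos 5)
Your proposal is correct and follows exactly the route the paper intends: the paper gives no separate proof of this theorem, stating only that it holds ``with a similar proof'' to Theorem 3.3, and your adaptation --- normalizing $W(\mathcal{R},\mathcal{R}')$ into an amplitude, establishing totality via mutually orthogonal spectral regions $\mathcal{R}_n$ and Zorn's lemma, and then verifying conditions (1)--(3) of Definition \ref{def21} --- is precisely that proof transported \emph{mutatis mutandis}, including the caveat about the formal status of the path integral and of $\hat{P}$. Nothing further is needed.
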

\subsection{General boundary fomulation of quantum theories} The general boundary formulation  of quantum theory proposed in Ref. \cite{oe03a} is designed to describe a quantum theory in terms of state spaces on boundaries of regions of spacetime and transition amplitudes between them. This formulation inherits the structures of a topological quantum field theory (TQFT), but they are also suitable for describing non-topological theories such as QM. There is a caveat that in quantum gravity a smooth manifold structure is not expected for quantum geometries, so that there is no obvious connection between the general boundary formulation and quantum gravity. Here we simplify the full-fledged formulation as in \cite{oe16} to  a ``skeleton" version with the strucutre of transition probability retained. We call it the skeleton general boundary fomulation (sGBF). 

In this simplified version (with many recent refinements neglected), the general boundary formulation consists of regions $M$ of spacetime and their boundary hypersurfaces $S\subset\partial M$.  On each $S$ is defined a space $\mathcal{H}_S$ of states, and on each region $M$ is assigned an amplitude functional $Z(M)[\psi]\in\mathbb{C}$ for $\psi\in\mathcal{H}_S$. Notice that the data $(\mathcal{H}_S,Z(M))$ are not smooth manifolds in sGBF, we can envison them as embedded in the rgions of a spacetime manifold such as embedded spin networks.
We require $(\mathcal{H}_S,Z(M))$ satisfy the following (TQFT) axioms (1), (2) and the quantization axiom (3):
\begin{enumerate}
\item {\it involution}. The dual space $\mathcal{H}_S^*$ of $\mathcal{H}_S$ is obtained by changing the orientation of $S$. Now the amplitude $Z(M)$ defines a map $Z(M):\mathcal{H}_S\rightarrow\mathcal{H}_S^*$, or a cobordism $Z(M)=\braket{\mathcal{H}_S^*,\mathcal{H}_S}$.
\item {\it multiplication}. For disjoint components of $S=\cup_iS_i$ where  $S\subset\partial M$ is a codimension 1 surface with no boundary, the state space is a tensor product of each, $\mathcal{H}_S=\otimes_i\mathcal{H}_{S_i}$. For two spacetime regions $M_1,M_2$ sharing a common boundary $S_3$, if $M_1$ has another boundary $S_1$ and $M_2$ a $S_2$, the gluing $M=M_1\cup_{S_3} M_2$ gives $Z(M)=Z(M_2)\circ Z(M_1)$, where $Z(M_1):\mathcal{H}_{S_1}\rightarrow\mathcal{H}_{S_3}$ and $Z(M_2):\mathcal{H}_{S_3}\rightarrow\mathcal{H}_{S_1}$. If $S_3=\emptyset$, then $Z(M)=Z(M_2)\otimes Z(M_1)$.
\item {\it holographic quantization}\footnote{In the recent works of Oeckl \cite{oe16}, the quantization axiom is not included in the axioms of general boundary formulation. Here we work with the Schr\"odinger representation obtained by Feynman's path integral as the quantization axiom/postulate, as is originally used in \cite{oe03a}, with the aim to apply in spin foam models as a proof of principle.}. 
The space $K_S$ of field configurations $\phi$ on $S$ with an action $\mathcal{S}$ determines a classical solution inside $M$. Then $\mathcal{H}_S$ is the space $C(K_S)$ of of complex valued functions on $K_S$. For any state $\psi\in\mathcal{H}_S$, the amplitude {\it functional} is
\[Z(M)[\psi]=\int_{K_S}\mathcal{D}\phi_0\psi(\phi_0)\Bigl(\int_{\phi|_S=\phi_0}\mathcal{D}\phi e^{\frac{i}{\hbar}\mathcal{S}[\phi]}\Bigr)\equiv\int\mathcal{D}\phi_0\psi(\phi_0) Z(M)\]
where $\phi_0\in K_S$ and $\phi$ is the field configurations inside $M$ and $Z(M)$ is the amplitude map that evaluates to boundary Hilbert spaces.
\end{enumerate}
Note that the functorial axiom of TQFT \cite{Atiyah} is implicit in the assignment of state spaces and amplitudes. One can readily check that the generic data $(C(K_S),Z(M))$ in (3) satisfy the requirement of (1) and (2), and hence this formulation goes beyond a TQFT. In the following, we still work with the manifold symbols $M,S$, but we should  keep in mind that they should be understood as the smooth manifolds on which the discrete data live.

For a quantum theory described by a Hilbert space, such as QM and QFT, one can construct the following general boundary formulation. The regions are chosen as $\Sigma\times[t_1,t_2]$ where $\Sigma$ is the spatial part and the time interval indicates that the boundaries are the time-slices $S_1,S_2$. Then obviously $\mathcal{H}_{S_1}=\mathcal{H}_{S_2}^*$ and for a pair of states $\psi\in\mathcal{H}_{S_1},\phi\in\mathcal{H}_{S_2}$, the amplitude is the usual transition amplitude $\braket{\phi|U(t_2,t_1)|\psi}$ where $U$ is the unitary evolution operator. Now the difference come in the holographic quantization: the configuration space $K_S$, the way of choosing $C(K_S)$ and the representation of amplitudes. In the case of QFT, the spacetime regions $M$s are those in the Minkowski spactime and the boundaries $S$s are general hypersurfaces, not necessarily spacelike (because of background independence), in the Minkowski spacetime. 

In the general boundary formulation of quantum theory, probabilities can be defined conditionally \cite{oe07}. Let $\mathcal{P}\subset\mathcal{H}_S$ be the subspace of states representing the preparations, and $\mathcal{O}\subset\mathcal{H}_S$ be the observations. Then the probability of observing $\mathcal{O}$ given the preparations $\mathcal{P}$ with  $\mathcal {O}\subset \mathcal{ P}$ is
\[P(\mathcal{O}|\mathcal{P})=\frac{|Z(M)\circ P_{\mathcal{P}}\circ P_{\mathcal{O}}|^2}{|Z(M)\circ P_{\mathcal{P}}|^2}=\frac{\sum_i\lvert Z(M)[\psi_i]\rvert^2}{\sum_j\lvert Z(M)[\phi_j]\rvert^2}\]
where $P_{\mathcal{P}},P_{\mathcal{O}}$ are projectors onto the subspaces and $\psi_i,\phi_j$ are orthonormal bases of $\mathcal{O}$ and $\mathcal{P}$ respectively. Suppose the usual case of time-slices  as above, then the conditional probability recovers the conventional form $P(\mathcal{O}|\mathcal{P})=|\braket{\phi|U(t_2,t_1)|\psi}|^2$ if we choose as cobordisms in $\mathcal{H}_{S_1}\otimes\mathcal{H}_{s_2}$, $\mathcal{P}=\phi\otimes\mathcal{H}_{S_2}$ and $\mathcal{O}=\mathcal{H}_{S_2}\otimes\psi$. In fact, this probability  defines a transition probability  between states on different boundaries.
\begin{theorem}\label{thm37}
Given a spacetime region $M$ with boundaries $S_i$, the discrete data $(\mathcal{H}_{S_i},P(.|.))$ define a transition probability space.
\end{theorem}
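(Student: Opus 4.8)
The plan is to verify the three conditions of Definition \ref{def21} for the pair $(\mathcal{H}_{S_i}, P(\cdot|\cdot))$, working by analogy with the proof of Theorem 3.3 but now with the conditional probability $P(\mathcal{O}|\mathcal{P})$ playing the role of the transition probability. The first observation is that, as already noted in the text, when $\mathcal{O}\subset\mathcal{P}$ the quantity $P(\mathcal{O}|\mathcal{P})$ behaves like a transition probability from the state (space) $\mathcal{P}$ to the state (space) $\mathcal{O}$; so I would first fix the interpretation that the ``states'' of the transition probability space are the one-dimensional subspaces (rays) in $\bigoplus_i\mathcal{H}_{S_i}$, with $\mathcal{P}$ and $\mathcal{O}$ specialized to such rays $\psi\otimes\mathcal{H}_{S_2}$ and $\mathcal{H}_{S_1}\otimes\phi$ as in the time-slice example, so that $P$ reduces to $|Z(M)[\psi\otimes\phi]|^2$ suitably normalized, i.e. to $|\braket{\phi|U|\psi}|^2$ in the Hilbert-space model.

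First I would check condition (1): $0\leqslant P(\mathcal{O}|\mathcal{P})\leqslant 1$ is immediate from the defining quotient $\sum_i|Z(M)[\psi_i]|^2\big/\sum_j|Z(M)[\phi_j]|^2$ since $\mathcal{O}\subset\mathcal{P}$ makes the numerator a partial sum of the denominator (here one uses that $\psi_i$ extends to an orthonormal basis containing the $\phi_j$, or rather that the $\mathcal O$-basis is a subfamily of the $\mathcal P$-basis). For $P=1\Leftrightarrow \mathcal{O}=\mathcal{P}$: equality of the two sums forces every basis vector of $\mathcal{P}$ outside $\mathcal{O}$ to have vanishing amplitude, and after quotienting by the null states (the ones on which $Z(M)$ vanishes identically — exactly the construction of the physical/boundary Hilbert space) this means $\mathcal{O}=\mathcal{P}$. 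Condition (2), symmetry $P(a,b)=P(b,a)$, is where the modulus square enters: because $Z(M)[\psi\otimes\phi]$ is symmetric up to complex conjugation under reversing the orientation of the boundary (the involution axiom (1) of sGBF, i.e. $\mathcal{H}_S^*$ is $\mathcal{H}_S$ with reversed orientation and $Z(M)=\braket{\mathcal{H}_S^*,\mathcal{H}_S}$), one has $Z(M)[\phi\otimes\psi]=\overline{Z(M)[\psi\otimes\phi]}$, so the modulus squares agree; I would phrase this exactly as the footnote to Theorem 3.3 does. For condition (3) I would use the multiplication/gluing axiom (2): picking a basis $R$ of orthogonal states on an intermediate boundary $S_3$ and gluing $M=M_1\cup_{S_3}M_2$ gives $Z(M)=Z(M_2)\circ Z(M_1)=\sum_{r\in R}Z(M_2)[\,\cdot\otimes r]\,Z(M_1)[r\otimes\cdot\,]$, which upon taking modulus squares and using the normalization in the denominator yields $\sum_{r\in R}P(a,r)=1$ — this is the sGBF incarnation of the completeness/resolution-of-identity that made the analogous step work in Theorem 3.3.

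The main obstacle, as in the earlier theorems, is the totality/basis issue hidden in condition (3): one must exhibit a genuine maximal orthogonal family $R$ and be sure the sum $\sum_{r\in R}P(a,r)$ converges to exactly $1$ and not merely $\leqslant 1$. I would handle this the same way Theorem 3.3 and its proof do — invoke Zorn's lemma to get a maximal orthogonal set, note that for an intermediate slice the gluing axiom gives a resolution of the identity so the sum is a full (not partial) sum, and dispose of the $\geqslant 1$ direction by the same trick of isolating the term with $a=r$. A secondary subtlety is that $Z(M)$ is only an amplitude \emph{functional} and may be unbounded or ill-normalized; I would sidestep this exactly as the paper sidesteps it in the path-integral lemmas, by passing to the normalized quotient (dividing by $\sqrt{Z(M,M)}$-type factors, equivalently working in the GNS/boundary Hilbert space already built into axiom (3)) and remarking that the normalization factor renders $0\leqslant P\leqslant 1$ and that a fully rigorous treatment depends on the chosen construction of the path integral. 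So the proof is essentially: restate the sGBF axioms as the three Definition \ref{def21} conditions, with involution $\Rightarrow$ symmetry, gluing $\Rightarrow$ the basis sum, and the conditional-probability quotient $\Rightarrow$ boundedness and the $P=1\Leftrightarrow$ equality clause.
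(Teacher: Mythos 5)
Your proposal is correct and follows essentially the same route as the paper's (much terser) proof: the paper likewise takes the boundary state spaces as the states, declares conditions (1) and (3) of Definition \ref{def21} to be satisfied by the conditional-probability construction (citing Oeckl's probability paper), and derives the symmetry condition (2) from the crossing/involution symmetry of the amplitudes — exactly the involution-axiom argument you give. Your write-up simply fills in the details the paper leaves to the reference (the partial-sum bound for (1), the null-state quotient for the $P=1$ clause, and the gluing/resolution-of-identity for (3)), so it is a legitimate expansion of the same argument rather than a different one.
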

\begin{proof}Now the state spaces should be the class $\{\mathcal{H}_{S_i}\}$ of boundary spaces of states. Then conditions (1) and (3) of Definition \ref{def21} are easily satisfied. (Cf. Ref. \cite{oe07}.) The symmetry condition is satisfied because of the crossing symmetry in defining the amplitudes.
\end{proof}

To see that the sGBF includes both QM and quantum gravity, consider the following formal examples.
\begin{example}
An immediate example is the transition amplitude between two arbitary  states $\ket{\psi_\alpha}$ and $\ket{\psi_\beta}$ (not necessarily being those eigenstates $\ket{q^a}$)  in general covariant QM,
\begin{align*}
W(\psi_\beta,\psi_\alpha)=&\braket{\psi_\beta|\hat{P}|\psi_\alpha}=\int dq^a\int dq^{\prime a}\braket{\psi_\beta|q^a}\bra{q^a}\hat{P}\ket{q^{\prime a}}\braket{q^{\prime a}|\psi_\alpha}\equiv\\
\equiv&\int dq^a\int dq^{\prime a}\psi^*_\beta(q^a)\psi_\alpha(q^{\prime a})W(q^a,q^{\prime a})
\end{align*}
which is exactly the quantization axiom (3). From this, we clearly see the necessity of smearing a quantum state over a spacetime region (or a hypersurface here) in order that the general covariant QM can be formulated in the general boundary formulation.
\end{example}
\begin{example}
In LQG, the $n$-point function is defined in the general boundary formulation \cite{MR05}. Consider a surface $\Sigma$ in flat spacetime, bounding a compact non-flat region $\mathcal{R}$, and approximate the n-point function by replacing the action $S[g]$ outside $\mathcal{R}$ with linearized ones. Let $\gamma$ be the value of the field on $\Sigma$, $W_\Sigma[\gamma]$ the internal integral defined by
\[W_\Sigma[\gamma]=\int_{\varphi\restriction_\Sigma=\gamma}\mathcal{D}\varphi e^{S_{\mathcal{R}}[\varphi]}, \quad \text{supp}\varphi\subset\mathcal{R},\]
 and $\Psi_\Sigma[\gamma]$ the outside integral defined alike, then the n-point function is
\[W(x_1,...,x_N)\sim\int\mathcal{D}\gamma W_\Sigma[\gamma]\gamma(x_1)...\gamma(x_N)\Psi[\gamma]\equiv\braket{W_\Sigma|\gamma(x_1)...\gamma(x_N)|\Psi_\Sigma}.\] 
Here we have two regions intersecting the real boundary $\Sigma$, but the integral still satisfies the quantization axiom (3).
\end{example}

\subsection{Spin foams} A spin foam model defines the transition amplitudes, or partition functions, between 3-geometries as a  ``sum over geometries".
\[Z=\sum_{\mathfrak{m}\in\mathfrak{M}}W(\mathfrak{m})A(\mathfrak{m})\]
where the $\mathfrak{m}$ are spin foam ``molecules" constituting the spin foam (see Ref. \cite{ORT15} or Appendix \ref{AAA} for the definition), the $W$ are the measure factors of $\mathfrak{m}$ and the $A$ are the geometric amplitudes of $\mathfrak{m}$.

 Despite the geometric forms, the spin foam amplitudes $Z$ should keep the properties of transition amplitudes, or $n$-point functions, between spin network states. In fact, they are exactly the transition amplitudes in a  group field theory (GFT) and the group fields correspond to the spin networks on  3-geometries \cite{PR11,Mik01}. At the formal level, this is well illustrated by the second quantization formulation of GFT \cite{Ori16}. Therein the spin networks $(\Gamma,\{j\},\{I\})$, on which the canonical LQG is based, are cut into disjoint {\it open spin network vertices} $(V,j)$ whose edges can be glued again to recover the original spin networks. On these spin network vertices are assigned a function of group elements in $G$,\footnote{Here the valences of all vertices are taken to be the same, that is, we are considering $d$-simplical graphs. The extension to arbitrary valent complexes are possible and can be found in Ref. \cite{ORT15}.}
\[\varphi(g_1^1,g_1^2,...,g_1^d,...,g_v^1,g_v^2,...,g_v^d)\equiv\varphi(\vec{g}_1,...,\vec{g}_v),\quad d=\dim\{j\},~v=\dim\{V\}.\]
The kinematical cylindrical functions constituting the Hilbert space of canonical LQG can be obtained by group averaging  $\varphi$ over the group $G(=$SU(2) usually). 
Then the Hilbert space of LQG can be written as a direct sum of individual Hilbert spaces on open spin network vertices,
$\mathcal{H}_{\text{kin}}=\oplus_{v\in V}\mathcal{H}_v$.
Now by second quantization we mean that the function $\varphi(\vec{g}_1,...,\vec{g}_v)$ is a $v$-partite many-body wave function with each single particle state living in $\mathcal{H}_v$, so that each many-body state can be decomposed into products of single particle states,
\[\ket{\varphi}=\sum_{i\in V}\varphi(\vec{g}_1,...,\vec{g}_v)\ket{\vec{\chi}_1}\ket{\vec{\chi}_2}...\ket{\vec{\chi}_v}\]
where $\braket{\vec{g}|\vec{\chi}}$ is the wave function for a single spin network vertex. This makes $\mathcal{H}_{\text{kin}}$  into a Fock space, and the occupation number is just the number of edges attaching an open spin network vertex so that one can define the occupation number representation,
\[\braket{\vec{g}|\varphi}=\sum_{i\in V}\varphi(\vec{g}_1,...,\vec{g}_v)\prod_i\braket{\vec{g}_i|\vec{\chi}_i}\equiv\sum_{n_i}C(n_1,...,n_a,...)\braket{\vec{g}|n_1,...,n_i,...}.\]
 As in the usual second quantization scheme, one can introduce creation and annihilation oprators $a^\dagger_{\vec{\chi}},~a_{\vec{\chi}}$ generating these occupation number states, and the resulting field operators $\hat{\varphi}=\sum_{\vec{\chi}}a_{\vec{\chi}}\varphi$ are just the field operators in GFT. In this sense the  transition amplitudes will take the conventional QM form \cite{Mik01,Ori16}
\[Z=\braket{\psi|\varphi}=\braket{\varphi|e^{iS}|\varphi}=\sum_{\Gamma}W(\Gamma)A(\Gamma),\]
where in the last expression  the $\Gamma$ is the perturbative Feynman diagram in GFT and one has $A(\Gamma)=A(\mathfrak{m})$ for some $\mathfrak{m}$. Hence all the conditions for tansition probability spaces are formally satisfied.

There is a conceptional difficulty here that in GFT there is no spacetime and hence no boundaries, and so are the spin foams constructed from GFT. As we have emphasized in the last (sub)sections, the transition probability space approach focuses on the quantum states and the transition probabilities between them. In this sense, the boundaries are subset of states in the state space. Also from the sGBF one can see that it is possible in principle to embedd the discrete spin networks/foams into a smooth manifold.

 At a more sophisticated level, in order to find the structures of transition probability spaces, we can put the spin foam models into the general boundary formulation.
  Consider the slicing of a closed spin foam $\sigma$ into two open spin foams $\sigma_1,\sigma_2$ with $\partial\sigma_1=\partial\sigma_2=\Gamma$. Then on the boundary spin networks $\Gamma$ there are boundary vertices $\bar{v}$ and boundary edges $\bar{e}$. The spin foam sum with boundaries is (schematically)
\[Z_{\sigma,\partial\sigma}=\sum_{\text{internal }(f,e,k)}\prod_{f,e,k} W_{f,e,k} \prod_{v\notin\partial\sigma} A_v(j,\mathcal{I},k)\big|_\partial\]
where the $k$ label the wedge links, i.e. the vertex-face pair $(vf)$ labeling the subdivision of a face into wedges consisting of vertices, half edges and internal edges designated to define the bivectors in EPRL  \cite{CF08}. When one glues $\sigma_1,\sigma_2$ back to a closed $\sigma$, the boundary vertices and edges become respectively internal edges and faces. Therefore
\[Z_{\sigma=\sigma_1\cup_\Gamma\sigma_2}=\sum_{\partial(f,e,k)}\prod_{\partial(f,e,k)} W_{\partial(f,e,k)} Z^*_{\sigma_2,\partial\sigma_2}Z_{\sigma_1,\partial\sigma_1}\big|_\partial\equiv\braket{Z_{\Delta_2}|Z_{\Delta_1}}.\]
Such a $Z$ thus preserves the composition of cobordisms and hence they form a corbordism category. Indeed, in the EPRL spin foam model, states on the boundary 3-geometries match the projected spin network states, and the amplitude functional  can be written as a path integral
\[Z_\sigma=\sum_{\partial(f,e,k)}\prod_{\partial(f,e,k)} W_{\partial(f,e,k)}Z_{\sigma,\partial\sigma}\Psi_P\] 
where $\Psi_P$ is the projected spin networks. This obviously takes the form of the general boundary formulation. We refer to Ref. \cite{CF08} for these details (although the distinction between the continumm and discrete GBF is not made there).

Hence the EPRL spin foam model admits a general boundary formulation with the boundary state space being that of projected spin networks and with $Z(M)$ the spin foam sum matching the boundary data. Then as a corollary of Theorem \ref{thm37} in the last subsection, we see that they also defines a transition probability space. 

Alternatively, one can define  transition probability spaces by first identifying the transition amplitude spaces. In EPRL, the spin foam sum is indeed a transition amplitude. An important observation is that the stationary amplitude equals identity since the contractable loops are physically equivalent to the base point.\footnote{This is because one must fix the redundant gauge degrees of  freedom or incorporate diffeomorphism invariance into the path integral measures of spin foams so as to get an anomaly-free spin foam quantization. See Ref. \cite{BP10} for arguments based on toy models.} The preservation of cobordisims by the spin foam sum ensures the requirement of compositions of amplitudes. This is also in line with the holonomy representation of spin foams \cite{MP12} where the spin foam amplitude can be written as the product of vertex amplitudes and {face} amplitudes
 \[Z=\sum_{\mathfrak{m}}\int_Gdh_{vf}\prod_vA_v\prod_f\delta\Bigl(\prod_{v\in f}h_{vf}\Bigr),\quad v,f\in\mathfrak{m}\]
where $h_{vf}$ is the holonomy along the wedge link. Since the face amplitudes are invariant under local face cuttings (or compositions), 
\[\int dh_{ext}\delta(h,...,h_{ext})\delta(h',...,h_{ext})=\delta(h,...,h'...),\quad h_{ext}: \text{holonomy along the cut},\]
then the compositions of amplitudes are assured.
The finiteness of the amplitude sum can be achieved by choosing suitable normalization of the spin foam sum (see, e.g. Ref. \cite{MV13}). Hence the class of A-sets $\mathcal{N}_A$ is nonempty. Furthermore, the resulting transition amplitude space is total as in the proof of theorem 3.3.. A problem might arise as whether the form of transition probabiltiy is the  conditional $P(.|.)$. In fact, the combinatorial structures of spin foams \cite{ORT15} can be readily fit into the formulation of the operational quantum theory,\footnote{Such an argument needs futher detailed explanations from which we will refrain. Pointedly stated,  the vertices are open spin network vertices in the second quantization formulation of GFT; the wires are edges to be connected; the joint probability is given by the spin foam sum.} so that the transition probability should be defined conditionally (within a cobordism). 

Therefore, in either approach one obtains the following result.
\begin{theorem}
In the EPRL spin foam model, the data $(\mathcal{H}_{S,p},P(.|.))$ consisting of the spaces $\mathcal{H}_{S,p}$ of (projected) spin network states on the boundary $S$ and transition probability $P(.|.)$, define a transition probability space.
\end{theorem}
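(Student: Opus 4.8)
The plan is to give the two complementary derivations already sketched in the text, so that the statement follows either as a corollary of Theorem~\ref{thm37} or from the transition-amplitude-space machinery of Section 2.3; I would present the first as the main proof and record the second as corroboration.

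\emph{First route (via Theorem~\ref{thm37}).} I would first establish that the EPRL model is a genuine instance of the sGBF, with boundary state spaces $\mathcal{H}_{S,p}$ of projected spin networks and amplitude functional $Z_\sigma$ matching the boundary data, by verifying the three axioms. The involution axiom is supplied by the crossing-symmetric factor $Z^*_{\sigma_2,\partial\sigma_2}$ in the gluing formula $Z_{\sigma=\sigma_1\cup_\Gamma\sigma_2}=\braket{Z_{\Delta_2}|Z_{\Delta_1}}$, which is precisely a map $\mathcal{H}_S\to\mathcal{H}_S^*$; the multiplication axiom is the behaviour of the spin-foam sum when boundary vertices and edges of $\sigma_1,\sigma_2$ become, after gluing, internal edges and faces of $\sigma$; and the holographic quantization axiom is the path-integral form $Z_\sigma=\sum W\,Z_{\sigma,\partial\sigma}\,\Psi_P$ with $\Psi_P$ the projected spin networks. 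With this in hand, Theorem~\ref{thm37} applies verbatim to the class $\{\mathcal{H}_{S_i,p}\}$: conditions (1) and (3) of Definition~\ref{def21} follow from the properties of the conditional probability $P(\mathcal{O}\mid\mathcal{P})$ (normalization against the ``preparation'' subspace, orthonormal bases of $\mathcal{O}$ and $\mathcal{P}$), and the symmetry condition (2) follows from the crossing symmetry used to define the EPRL amplitudes, which makes $|Z_\sigma|^2$ cobordism-symmetric.

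\emph{Second route (via transition amplitudes).} Here I would show directly that $(\mathcal{H}_{S,p},Z_\sigma)$ is a \emph{total} transition amplitude space in the sense of Definition~\ref{def23}, and then invoke the fact recalled in Section 2.3 that for a total transition amplitude space $|Z_\sigma|^2$ is a transition probability and $(\mathcal{H}_{S,p},|Z_\sigma|^2)$ a transition probability space. The ingredients are: (i) $Z_\sigma(a,a)=1$, from the stationarity observation that contractible loops are physically equivalent to the base point, so the self-amplitude is the identity; (ii) the composition law $Z_\sigma(a,b)=\sum_{c\in M}Z_\sigma(a,c)Z_\sigma(c,b)$, which is the preservation of cobordisms by the spin-foam sum — in the holonomy representation this is invariance of the face amplitudes under local face cuttings, $\int dh_{ext}\,\delta(h,\dots,h_{ext})\,\delta(h',\dots,h_{ext})=\delta(h,\dots,h',\dots)$; (iii) $\mathcal{N}_A\neq\emptyset$, obtained by choosing a normalization of the spin-foam sum under which $\sum_{c\in M}|Z_\sigma(a,c)Z_\sigma(c,b)|<\infty$; and (iv) totality, obtained exactly as in the proof of the corresponding theorem for covariant QM, by exhibiting a maximal mutually orthogonal family of boundary projected spin network states and applying Zorn's lemma, whence $\mathcal{N}_A=R$. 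One then checks that the probability so produced is the conditional $P(\cdot\mid\cdot)$ rather than a bare $|Z_\sigma|^2$, because the combinatorics of spin foams fit the operational networks-of-vertices-and-wires formulation of Section~\ref{sec24}, in which the joint probability of a network is intrinsically conditioned, so transitions must be taken within a fixed cobordism.

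\emph{Main obstacle.} I expect the genuinely delicate points to be steps (iii) and (iv) of the second route: the finiteness of the EPRL spin-foam sum is an open problem, so the argument there is unavoidably formal and rests on \emph{assuming} a normalization/regularization under which the $A$-set sums converge, and the claim that a maximal orthogonal family of boundary projected spin networks actually \emph{covers} the boundary state space (not merely constitutes an orthogonal set) should be treated at the same formal level as in the covariant-QM theorems. The first route largely sidesteps convergence by absorbing it into the well-definedness of $Z(M)$ within the sGBF, which is why I would lead with the corollary of Theorem~\ref{thm37} and treat the transition-amplitude derivation as an alternative confirmation.
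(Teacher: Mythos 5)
Your proposal follows essentially the same two-pronged argument as the paper: the main route casts EPRL into the sGBF with projected spin network boundary states and invokes Theorem~\ref{thm37}, while the alternative route identifies a total transition amplitude space via stationarity of contractible loops, cobordism composition (face-cutting invariance in the holonomy representation), normalization for finiteness, and totality \`a la the covariant-QM theorem, with conditionality of $P(\cdot|\cdot)$ justified through the operational formulation. The caveats you flag about convergence and totality are precisely the points the paper itself leaves at the formal level, so the proposal matches the paper's proof in both structure and rigor.
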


\subsection{Reconstruction of the canonical formalism} We have briefly argued that the spin foam amplitudes are indeed transition amplitudes in the sense of Definition \ref{def23}. Based on this, we can try to reconstruct the canonical formalism as in the discussion below Definition \ref{def23}. The same idea has already been exploited in Ref. \cite{PR11} where the spin foam amplitudes are interpreted as the quantum gravitational analogues of Wightman functions in ordinary QFT,  and the  spin network states as quantum states in QFT. Moreover, the GNS construction is applied to the $C^*$ algebra on the space $\mathcal{A}$ of spin network states together with linear functionals on $\mathcal{A}$. This  is a  general result for spin foam models, and also holds for the EPRL spin foam model we have discussed above. However, we will see that the  transition amplitudes provide a simpler approach, in that we no longer need the QFT analogues to apply the GNS construction.

More explicitly, to define the null sequence one first needs to introduce as before the amplitude functions $A_{s_1}(s_2)=A(s_1,s_2)\equiv Z(s_1,s_2)$ where the $s_i$ denote the spin network states. Then the null sequence condition becomes
\[\sum_i\mathfrak{a}_iA(s_1,s_{2,i})=\sum_i\mathfrak{a}_iA_{s_1}(s_{2,i})=A_{s_1}\Bigl(\sum_i\mathfrak{a}_is_{2,i}\Bigr)=0\quad\Rightarrow\quad\sum_i\mathfrak{a}_is_{2,i}=0.\]
If one uses the Dupuis-Livine map \cite{DL10} from the usual SU(2) spin network states to those projected spin network states,
\[s\mapsto \tilde{s}(g)=\int_{SU(2)}dhK(g,h)s(h)\Rightarrow\sum_h\mathfrak{k}_hs_{h},\quad g\in SL(2,\mathbb{C}),~h\in SU(2)\]
where in the sum we have supposed a discrete version of this map just to compare the notations, which is wrong since the group is continuous.
Then  the null sequence condition can be interpreted as the closure constraint imposed on the projected spin networks. Indeed, if one further takes the coherent state representation of SU(2) spin network states \cite{LS07}, then after group-averging   over SU(2), the  spin network state $\ket{j,n}$ in the so-called spin-normal representation is lablled by the normals $n$ assinged to the triangles of the tetrahedra which satisfy the closure condition. Besides, the map above is reduced to a discrete one in the coherent state representation as desired. 

Meanwhile, the spin foam amplitudes are naturally $A$-forms, since in spin foam models transitions are just relational and hence interchanging  $s_1$ and $s_2$ gives the same amplitude
\[\sum_i\mathfrak{a}_iA(s_1,s_{2,i})=\sum_i\mathfrak{a}_iA(s_{2,i},s_1)=0\]
for the null sequence $(\mathfrak{a}_i,s_{2,i})$. The bound $\mathfrak{b}$ is definable as in Ref. \cite{MV13}. Then we have the space $\mathcal{B}$ of bounded linear amplitude functions or $A$-forms. The product is defined as the composition of amplitudes, the involution as the complex conjugation, and the bound as $||A||=\inf\mathfrak{b}$ as before. These make $\mathcal{B}$ a $C^*$ algebra, and again there is a $C^*$ isomorphism from $\mathcal{B}$ to those bounded linear operators in the Hilbert space that can be chosen as
the GNS construction from  a discerete version of  holonomy-flux algebra $\mathfrak{P}$ of spin network states.

Let us turn to the structures of superselection sectors. For a transition probability space $(S,p)$, its superselection sectors are those irreducible subspaces $(S_i,p_i)$, the disjoint union of which compose the whole space $\sqcup_i(S_i,p_i)=(S,p)$. It is superselected in the sense that $p_i(S_i,S_j)=0$ for $i\neq j$. In the context of LQG, spin network states supported on distinct graphs are orthogonal, and hence kinematically one has the direct sum decomposition, $\mathcal{H}_{\text{kin}}=\oplus_\gamma\mathcal{H}_\gamma$. At the dynamical level, the superselection sectors can be defined again by the vanishing of transition probabilities or spin foam amplitudes. In principle, this can be done either in the non-graph-changing or graph-changing case, since in both cases the spin foam amplitudes are definable. This way the pertinent Hilbert space can be rendered separable.

\subsection{Logic of quantum gravity and entropy}\label{sec36}
Any quantum theory of gravity presuming the correctness of QM remains  a theory of QM, thereby the structures of quantum logic should be preserved. We have seen in the above subsections that LQG, especially the spin foam models, can be described alternatively by transition probability (or amplitude) spaces. Then constructing the quantum logic of LQG is now a conceptually trivial task, which is mathematically the same as  those in Ref. \cite{Del84,Pul90}.\footnote{
Note that obtaining quantum logics from transition probabilities seems to be different from the procedure of deriving quantum logics from the multidimensional probabilities or $S$-probabilities \cite{BM91} for the state space $S$. However, in analogy to the fact that the transition measures \cite{Can75,Mack63} defining the generalized transition probablities are constructed from the probability measures for two single events,  these reconstructions of logical structures are indeed equivalent.
} What we want to do is to  find the quantum logics directly from the structures of LQG. To this end, one could start from the Hilbert space of cylindrical functions in  canonical LQG. Mathematically, this should not be different from those in the Hilbert space representation theorem for quantum logics \cite{Piron64,Bug74,Pul86}. We will, inspite of being less rigorous, consider physically the spin foams as 4D geons in Appendix \ref{AAA}.

At the semantic level, such a reconstruction gives us the (realistic) information-value of a sentence in the logic from the probability-value of that sentence. However, this is not an emergence as it seems to be. In view of the construction of quantum logical probability \cite{HSP14}, quantum probability and quantum logic can be both defined on a lattice, so that our reconstruction only gives a complementary view of the underlying mathematical/physical structures. In particular, we show in the following  these two views are not  the same at scales.

Denote by $L_p$ the quantum logic of LQG obtained from the transition probability spaces. To define ``states" in  $L_p$ (for the definition of which see Appendix \ref{BBB}), consider a basis $R$ of the state space $S$ and its mutually orthogonal subdivision $R=\sqcup_iR_i$. Define the coarse-grained transition probability as $\sum_{r\in R_i}p(a,r)\equiv p(a,R_i),\forall a\in S$, then the map $\mu_p:R\rightarrow p(a,R_i)$ is a ``state" in $L_p$. In the specific $L_p$ of LQG, the quantum logical ``states" correspond to coarse-grained bulk spin foams. 
Although there are already many schemes of coarse-graining in spin foam models, the coarse-grained transition probability obtained from the general boundary transition amplitudes forgets much of the details of the interior fine-grained dynamics. So in order to distinguish non-equivalent coarse-grained spinfoams, let us turn to entropy functionals.

Recall that on a quantum logic $L$, entropy can be defined for a partition of $L$ \cite{Yuan05}. By a partition  of $L$, we mean a set $Q=(a_1...,a_i,...a_n)$ of elements $a_i\in L$ such that $\mu(\lor_{i=1}^na_i)=1$ and $Q$ is join-orthogonal, i.e. $(\lor_{i=1}^ka_i)\bot a_{k+1}$ for $1\leqslant k\leqslant n-1$.  A ``state" has the Bayesian property if $\mu(b)=\sum_i\mu(a_i)\mu(b|a_i),\forall Q,\forall a_i\in Q,b\in L$ where $\mu(b|a)$ is the conditional quantum logical ``state". For two partitions $Q,Q'$ of $L$, the join or {\it common refinement} $Q\cup Q'$ is still a partition of $L$ if the ``state" has the Bayesian property. Now one can define, for example, the simplest (Shannon-type) entropy of the partition $Q$ with respect to $\mu$,
\[H_\mu(Q)=-\sum_{i=1}^n\mu(a_i)\log\mu(a_i).\]
The subadditivity $H_\mu(Q\cup Q')\leqslant H_\mu(Q)+H_\mu(Q')$ holds if $\mu$ has the Bayesian property.

In $L_p$ a partition $Q_p$ can be defined as the subdivision of the basis  $R=\sqcup_iR_i$. Then by the definition of transition probabilities (or amplitudes), $Q_p$ has the Bayesian property,\footnote{
In the modern treatment of unsharp quantum measurements in terms of effect algebras, the Bayesian property could be violated, which requires additional structure such as sequential products. In the current case, the quantum logic $L_p$ inherits the properties of transition probabilities so that the Bayesian property is satisfied. Physically speaking, the causality  fixes the decoherence histories in the evolutions of spin foams, which in turn is presented by definable transition probabilties. Cf. Sec.\ref{sec43}.
} and hence the subadditivity of the entropy functionals for common refinement of a family of partitions $\{Q_{p1},...,Q_{pi},...,Q_{pk}\}$ of $L_p$. To proceed, let the partitions $Q_{pi}$ of $L$ be generated by a ``state"-preserving coarse-graing map $\mathfrak{C}:L_p\rightarrow\{Q_{pi};i=1,2,...,k\};Q_{p}\mapsto\mathfrak{C}_{(i)}(Q_{p})=Q_{pi}$, and $\mu_p(\mathfrak{C}(R_i))=\mu_p(R_i)$. Then for a partition $Q_p$, it is clear that $H_{\mu_p}(\mathfrak{C}(Q_p))=H_{\mu_p}(Q_p)$. Next, define the Kolmogorov-Sinai entropy
\[h(\mathfrak{C})=\sup_{\{Q_{pi}\}}h(\mathfrak{C},Q_{p}),\quad\text{where}~ h(\mathfrak{C},Q_{p})=\lim_{n\rightarrow\infty}\frac{1}{n}H_{\mu_p}(\bigcup_{i=1}^kQ_{pi}).
\]
The limit above exists due to the subadditivity of $H_{\mu_p}$. 

In Sec.\ref{sec21} we have defined isomorphic transition probability spaces in the fine-grained sense. Now let $\mathfrak{I}:S_1\rightarrow S_2$ be such an isomorphism of two state spaces $S_1$ and $S_2$ that preserves the transition probabilities. Taking partitions into consideration, we further require that $\mathfrak{C}_2(\mathfrak{I}R_i)=\mathfrak{I}\mathfrak{C}_1(R_i),\forall R_i\subset R\subset S_1$.
\begin{proposition}
Let $(S_i,p_i),i=1,2$  be two fine-grained transition probability spaces isomorphic to each other. If on each space is given a coarse-graining generator $\mathfrak{C}_i$ of partition on the quantum logic $L_{pi}$ reconstrcted from $(S_i,p_i)$, then $h(\mathfrak{C}_1)=h(\mathfrak{C}_2)$.
\end{proposition}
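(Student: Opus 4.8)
The plan is to show that the Kolmogorov--Sinai entropy $h(\mathfrak{C})$ is an invariant of the isomorphism class of $(S,p)$ equipped with a compatible coarse-graining generator, by transporting partitions along $\mathfrak{I}$ and checking that every quantity entering the definition of $h$ is preserved. First I would observe that, since $\mathfrak{I}:S_1\to S_2$ preserves transition probabilities, it sends a basis $R$ of $S_1$ to a basis $\mathfrak{I}R$ of $S_2$ (orthogonality $p_1(a,b)=0$ is preserved, as is maximality, and $\sum_{r\in R}p_1(a,r)=1$ transports verbatim). Consequently a mutually orthogonal subdivision $R=\sqcup_iR_i$ is carried to a subdivision $\mathfrak{I}R=\sqcup_i\mathfrak{I}R_i$, so $\mathfrak{I}$ induces a bijection between partitions $Q_{p1}$ of $L_{p1}$ and partitions $Q_{p2}$ of $L_{p2}$, intertwining the join (common refinement) operation because intersections and orthocomplements of orthoclosed subspaces are preserved by a transition-probability isomorphism.

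Next I would check that $\mathfrak{I}$ preserves the ``states'' $\mu_p$ and hence the Shannon-type entropy of each partition. By construction $\mu_{p}(R_i)=\sum_{r\in R_i}p(a,r)$, so $\mu_{p2}(\mathfrak{I}R_i)=\sum_{r\in R_i}p_2(\mathfrak{I}a,\mathfrak{I}r)=\sum_{r\in R_i}p_1(a,r)=\mu_{p1}(R_i)$; therefore $H_{\mu_{p2}}(\mathfrak{I}Q_p)=-\sum_i\mu_{p2}(\mathfrak{I}R_i)\log\mu_{p2}(\mathfrak{I}R_i)=H_{\mu_{p1}}(Q_p)$ term by term. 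The compatibility hypothesis $\mathfrak{C}_2(\mathfrak{I}R_i)=\mathfrak{I}\mathfrak{C}_1(R_i)$ then guarantees that the families of partitions generated by the two coarse-graining generators correspond under $\mathfrak{I}$: $\mathfrak{C}_{2,(i)}(\mathfrak{I}Q_p)=\mathfrak{I}\,\mathfrak{C}_{1,(i)}(Q_p)$ for each $i$. Combining this with the fact that $\mathfrak{I}$ intertwines common refinements, we get $H_{\mu_{p2}}\bigl(\bigcup_{i=1}^k\mathfrak{C}_{2,(i)}(\mathfrak{I}Q_p)\bigr)=H_{\mu_{p1}}\bigl(\bigcup_{i=1}^k\mathfrak{C}_{1,(i)}(Q_p)\bigr)$ for every $n$ and every $Q_p$.

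Finally I would pass to the limit and supremum: $h(\mathfrak{C}_2,\mathfrak{I}Q_p)=\lim_{n\to\infty}\frac1n H_{\mu_{p2}}(\bigcup_i\mathfrak{C}_{2,(i)}(\mathfrak{I}Q_p))=\lim_{n\to\infty}\frac1n H_{\mu_{p1}}(\bigcup_i\mathfrak{C}_{1,(i)}(Q_p))=h(\mathfrak{C}_1,Q_p)$, the limits existing on both sides by the subadditivity of $H_{\mu_p}$ (which holds because the ``states'' have the Bayesian property, itself inherited from the transition-probability structure, as noted in the text). Since $Q_p\mapsto\mathfrak{I}Q_p$ is a bijection between the partition sets, taking the supremum over $\{Q_{p1}\}$ on one side equals the supremum over $\{Q_{p2}\}$ on the other, giving $h(\mathfrak{C}_1)=\sup_{Q_p}h(\mathfrak{C}_1,Q_p)=\sup_{Q_p}h(\mathfrak{C}_2,\mathfrak{I}Q_p)=h(\mathfrak{C}_2)$.

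The main obstacle I anticipate is not the limiting argument but the bookkeeping needed to confirm that a transition-probability isomorphism really does induce a well-defined order-isomorphism of the reconstructed quantum logics $L_{p1}\cong L_{p2}$ that is compatible with the partition/join structure used in the entropy definition --- i.e., that $\mathfrak{I}$ descends from bases and subspaces to the orthomodular poset of orthoclosed subspaces and respects joins of join-orthogonal families. This relies on the reconstruction machinery cited earlier (the orthoclosed subspaces forming an atomic orthomodular poset, and the isomorphism-of-spaces notion of Sec.~\ref{sec21}); once that functoriality is granted, everything else is a direct transport of structure. A secondary subtlety is making sure the hypothesis $\mathfrak{C}_2\circ\mathfrak{I}=\mathfrak{I}\circ\mathfrak{C}_1$ on subsets $R_i$ propagates to the generated partitions $Q_{pi}$ and to their common refinements, which I would handle by induction on the number of refinement steps using that $\mathfrak{I}$ preserves join-orthogonality.
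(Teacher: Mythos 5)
Your proposal is correct, but it follows a genuinely different route from the paper's own proof. You establish \emph{exact equality} at every stage: $\mathfrak{I}$ carries bases to bases and partitions to partitions, preserves the ``states'' $\mu_p$ term by term, and---via the compatibility hypothesis $\mathfrak{C}_2(\mathfrak{I}R_i)=\mathfrak{I}\mathfrak{C}_1(R_i)$ together with the claim that $\mathfrak{I}$ intertwines common refinements---forces $H_{\mu_{p2}}\bigl(\bigcup_i\mathfrak{C}_{2,(i)}(\mathfrak{I}Q_p)\bigr)=H_{\mu_{p1}}\bigl(\bigcup_i\mathfrak{C}_{1,(i)}(Q_p)\bigr)$, after which the limit and supremum transport verbatim. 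The paper instead deliberately avoids assuming that the refinements correspond canonically (it remarks that ``the ways of joining these partitions are generally different''), works with the common refinement of \emph{minimal} entropy $H_{\min}(Q_p)$, and uses subadditivity to derive the one-sided bound $h(\mathfrak{C}_1)\geqslant h(\mathfrak{C}_2)$, concluding by symmetry of the isomorphism. The trade-off is exactly the one you identify in your closing paragraph: your argument is sharper and more transparent (entropy is preserved partition-by-partition), but it rests on the unproved claim that a transition-probability isomorphism descends to an order-isomorphism of the reconstructed logics $L_{p1}\cong L_{p2}$ commuting with joins of join-orthogonal families; the paper's inequality chase sidesteps that functoriality question at the cost of a less explicit argument. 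If you supply the intertwining lemma you flag (induction on refinement steps, using that $\mathfrak{I}$ preserves orthogonality and hence join-orthogonality, plus the orthoclosed-subspace machinery of Sec.~\ref{sec21}), your version is a strict strengthening of the paper's proof.
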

\begin{proof} If there is only one partition $Q_{p1}$ of $L_{p1}$, then by acting the isomorphism $\mathfrak{I}$ one sees that $\mathfrak{I}Q_{p1}$ is the only partition of $L_{p2}$ and $H_{\mu_p}(Q_{p1})=H_{\mu_p}(\mathfrak{I}Q_{p1})$.

In the case of multiple partitions $Q_{p1},Q_{p2},...,Q_{pk}$ generated by $\mathfrak{C}(Q_p)$, one has the subadditivity 
\[H_{\mu_p}(Q_{p1}\cup Q_{p2}\cup...\cup Q_{pk})\leqslant H_{\mu_p}(Q_{p1})+H_{\mu_p}(Q_{p2})+...+H_{\mu_p}(Q_{pk}).\]
Since the ways of joining these partitions are generally different, one can choose a common refinement such that  $H_{\mu_p}(\bigcup_{i=1}^k Q_{pi})$ takes the smallest value $H_{min}(Q_p)$. Then by the subadditivity, 
\[H_{min}(Q_p)+\epsilon>H_{\mu_p}(\bigcup_{i=1}^k Q_{pi}),\quad\forall\epsilon>0.\]
Since $\mathfrak{I}$ is probability-preseving, one has $H_{\mu_p}(\bigcup_{i=1}^k Q_{pi})\geqslant H_{min}(\mathfrak{I} Q_{pi})$, which means there exists $h(\mathfrak{C}_1,Q_{p})$ defined by the smallest $H_{\mu_p}(\bigcup_{i=1}^k Q_{pi})\equiv H_{min}(Q_p)$ such that
\[h(\mathfrak{C}_1,Q_{p})\geqslant  h(\mathfrak{C}_2,\mathfrak{I}Q_{p})\quad\Rightarrow\quad h(\mathfrak{C}_1)\geqslant  h(\mathfrak{C}_2,\mathfrak{I}Q_{p}).
\]
For single $Q_p$ we have shown that $\mathfrak{I}Q_{p1}=Q_{p2}$, hence $h(\mathfrak{C}_1)\geqslant h(\mathfrak{C}_2)$. The reverse direction is similar.
\end{proof}
We thus see that if $h(\mathfrak{C}_1)\neq h(\mathfrak{C}_2)$, the underlying two fine-grained transition probability spaces cannot be isomorphic to each other. This on one hand improves the usefulness of the quantum logical ``states", and on the other hand shows the advantage of  transition probability spaces. Furthermore, entropy functionals establish equivalence classes of coarse-grained transition probabilities, which when stated in the language of spin foams removes unphysical gauge freedoms from the spinfoam path integral. In addition to temporal or dynamical relations (as logical connectives) in spin foams there are also ``spatial" gauge relations (as value-equivalent propositions), thereby the gauged spinfoam path integrals form an algebra of the Lindenbaum-Tarski type. This motivates us to dive into 2-categories in the next section.

\section{Categorical Characterizations}\label{sec4}
In this section we explore possible categorical characterizations of  transition probability spaces and spin foams. 

Before we proceed, we should point out that in the following categorical discussions, the probabilistic nature of the original transition probability space is diminished, while the transition amplitudes as logical connectives play the major role. Then  is it still meaningful in playing with categories, if the probabilistic attribute of quantum mechanics is absent? In fact, the categorical abstraction leads us to the more fundamental structure of {\it Constructor Theory} \cite{D13}. A category  consists of morphisms and likewise the constructor theoy is about tasks which are sets of transformations.  Without the probabilities, the statement about the probability of the possible or the impossible becomes those simply about the possible or the impossible, which is the basic principle of Constructor theory. The question of why they are possible  or  impossible is answered by the (quantum) spacetime itself as a constructor. The probabilities are attributes instantiated in the subsidiary theory of concern.

Although the cases to be considered are not new,  {
 they will be interesting if we keep the fundamental constructor structure in mind. It turns out that transition probability spaces realize many proposals by Crane  for quantum gravity.

\subsection{Categorification of a transition probability space}
The categorical characterization of transition probabilities was first considered by Uhlmann in Ref. \cite{Uhl85}. Therein the objects and morphisms (or arrows) are defined respectively as the states and the transition probabilities between states. For two different sets $A$ and $B$ of states that are causally related, or simply for two $C^*$ algebras connected by a completely positive map $K$, the $t$-functor is defined by exactly the  map $K$. Let us denote this category by {\bf TPS}. 

In {\bf TPS} the hom-set $hom(a,b)$ of  morphisms  between two objects $a$ and $b$ has at least two arrows, since given a usual transition probability one can always construct Cantoni's generalized transition probability from it. 
In this sense, one can define the category {\bf TP}$(a,b)$ of transition probabilities between states $a$ and $b$ with the different transition probabilities $T$ (in the generalized sense, with  measures replaced by valuations if necessary) as objects and the transformations between them as arrows. {\bf TP} is not necessarily the two-object category {\bf2}, since there is a large set of distance measures for quantum states \cite{MZC09}. Then {\bf TP} also satisfy the conditions in Definition \ref{def21}, since every object in {\bf TP} satisfies them, which means that {\bf TPS} is a 2-category with states as 0-cells, transition probabilities as 1-cells, and arrows in {\bf TP} as 2-cells, i.e.
\[
\begin{diagram}[w=3em]
~\\
\mathcal{H}_a & \upperarrow{p}
\lift{-2}{\ \ \Downarrow{t}}
\lowerarrow{T} & \mathcal{H}_b& \upperarrow{p'}
\lift{-2}{\ \ \Downarrow{t'}}
\lowerarrow{T'} & \mathcal{H}_c,\\
~
\end{diagram}\quad t,t'\in{\bf TP}.
\]

On the other hand, the spin foam also defines a category {\bf SF}\footnote{We use this notation to differ it from the $\mathcal{F}$ in the Introduction since it is now a 2-category. } where the objects are spin network states on quantum 3-geometries and the morphisms are the {\it equivalence classes} of spinfoams relating the spin network states supporting different representations. 
In Ref. \cite{Baez98}, the equivalence relation is defined by (i) affine transformations; (ii) subdivisions; (iii) orientation reversal of the spin foams. A moment of reflection shows that the operation of subdivisions is just the condition (3) of Definition \ref{def21}, and the orientation reversal is the condition (2). The affine transformations do not change the cells of the complexes on which the spin foams are based, so that the transition probabilities or the spin foam amplitudes remain intact. In order to identify the 2-cells in {\bf SF}, consider the description of spin foams via branched coverings  as in Ref. \cite{DMY10}. Recall that a branched covering of a three-manifold $W$ is a submersion $p:W\rightarrow S^3$ of $W$ into the three-sphere such that the covering is trivial when the graph $\Gamma$ embedded in $S^3$, the branched locus, is deleted. Two branched covering of three-manifolds $W_0,W_1$ along two distinct graphs  $\Gamma_0,\Gamma_1$ can be related by a branched cover cobordism $M$ such that $\partial M=W_0\otimes W_1^*$. The branched covering of $M$, $q:M\rightarrow S^3\times[0,1]$, reduces to those of $W_i,~i=0,1$ when restricted to $S^3\times\{0\}$ or $S^3\times\{1\}$. In the context of LQG, the spin networks $(\Gamma,j,\mathcal{I})$ embedded in  a three-geometry $W_i$ play the role of brached loci. For two spin networks based on the graphs $\Gamma,\Gamma'$, there is a branched covering of some three-manifold $W$ connecting them,
\[\Gamma\subset S^3\leftarrow W\rightarrow S^3\supset\Gamma'\]
which is in fact the 2-cell relating the two spin foams emanating from $\Gamma,\Gamma'$. The concept of cobordism we have used above to give the transition amplitude $Z(M)$ is in fact the  four-manifold $M$ such that there is a branched cover cobordism defining the spin foam,
\[\Sigma\subset S^3\times[0,1]\leftarrow M\rightarrow S^3\times[0,1]\supset\Sigma'\]
where $\Sigma$ and $\Sigma'$ satisfying respectively $\partial\Sigma=\Gamma_0\otimes\Gamma_1^*$ and $\partial\Sigma'=\Gamma'_0\otimes\Gamma_1^{\prime*}$ are branched loci of $W$. Therefore, {\bf SF} is a 2-category with the 2-cells defined by the branched covers.\footnote{ In the Introduction, we have seen that another way to get a 2-category is to take into account the category {\bf SN} of spin networks with vertices as objects and edges as arrows. Now the 2-morphisms are just spin foams connecting the spin network states. In this approach  the vertices are taken as objects, which spoils analogy with the transition probability spaces. Also notice that in Ref. \cite{DMY10} the 2-category for (top)spin foams are defined by as objects the (top)spin networks, as 1-morphism the branched coverings and as 2-morphisms(/cells) the branched covering cobordisms. Here we have exchange the definitions of 1- and 2- morphisms to fit into the framework of transition probabilty spaces. }

By the theorems of the last section, one has a faithful functor $\mathfrak{Pr}:{\bf SF}\rightarrow{\bf TPS}$ such that the description of spin foams by transition probabilities defines an equivalence of categories. However, there are properties that are not shown in transition probability spaces. For instance, the causality is not shown in transition probability spaces since they are time-symmetric: if one takes the point of view that the state space $S$ should be a $C^*$-algebra, the involution (i.e. the dagger/transposing operation) is allowed by the symmetric condition in Definition \ref{def23} (which inherits from the time-reversal symmetry in the standard quantum formalism). Causality  becomes explicit only when there are further conditional inputs as shown in Sec.\ref{sec24}.
Besides, since  the (spin network) states are evolving under transitions, the transition probabilities can not tell us whether the properties of the states are preserved. To understand these deeper structures without external inputs, let us discuss more categorical characterizations.

\subsection{Property transitions, classical germs and quantum topoi}
We have seen in \S2.4. that in order to be consistent with the operational quantum theory, one has to define the transition probability conditionally.  Let $\mathcal{H}$ be the Hilbert space of states, $\mathcal{L}(\mathcal{H})$ be the Hilbert lattice with as partial order the subspace projections in $\mathcal{H}$, and  $\mathcal{P}(\mathcal{H})$ be the power set of  $\mathcal{H}$. Then for an initial state space $\mathcal{H}$, each subspace $\mathcal{S}\in\mathcal{P}(\mathcal{H})$ represents a possible transition probability $P=\text{tr}(WP_{\mathcal{S}})/\text{tr}(WP_{\mathcal{H}})$ conditioned on $\mathcal{H}$. The following characterization of property transitions is due to Coecke and Stubbe \cite{CS99}. We describe it here for clarification of the above-mentioned problems. In this subsection we neglect the possible intuitionistic logics and still work with quantum logics.

Define the {\it operational resolution} $\mathcal{C}:\mathcal{P}(\mathcal{H})\rightarrow\mathcal{L}(\mathcal{H})$ that preserves the partial order of $\mathcal{P}(\mathcal{H})$ defined by subset inclusions. In other words, $\mathcal{C}$ is a left adjoint of the embedding functor from $\mathcal{L}(\mathcal{H})$ to the Boolean algebra formed by subset inclusions in $\mathcal{P}(\mathcal{H})$  such  that the disjunction is definable for $\mathcal{L}(\mathcal{H})$ in the ambient $\mathcal{P}(\mathcal{H})$. The reason that we have used the word ``functor" will be clear in the following.

In the current case, the operational resolution is simply the map $\mathcal{C}:\mathcal{T}\mapsto\cup\mathcal{T}$ from $\mathcal{T}$ to the join of all $\mathcal{T}\in\mathcal{P}(\mathcal{H})$. Now given two state spaces $\mathcal{H}_1$ and $\mathcal{H}_2$, one can write a general state transition as
\[f:\mathcal{P}(\mathcal{H}_1)\rightarrow\mathcal{P}(\mathcal{H}_2),\quad\text{such that}~f(\cup_i\mathcal{T}_i)=\cup_if(\mathcal{T}_i),~\forall \mathcal{T}\in\mathcal{P}(\mathcal{H}_1).\]
Notice that $f$ is probabilisitic in the sense that it maps an initial subspace in $\mathcal{P}(\mathcal{H}_1)$ into all possible subspaces in $\mathcal{P}(\mathcal{H}_2)$, and hence the join-preserving condition corresponds to $\sigma$-additivity of probabilistic measures. These maps $f$ form a join-complete lattice with as bottom element the map mapping to the empty set, and as top element the identity. 
In the physical sense, the union or disjunction of subspaces means the lacking of resolution, i.e. one is not able to distinguish the state transitions to the smaller subspace.
Hence one obtains a {\it quantaloid}\footnote{A quantaloid is a category of which every hom-set  is a join-complete lattice with the composition of joins being distributive. More precisely, it is a category enriched over a symmetric monoidal closed join-complete lattice.} $\mathcal{Q}$ with as objects the data $(\mathcal{H},\mathcal{L}(\mathcal{H}),\mathcal{C})$, and as hom-sets the complete lattice formed by $f$. Then by the propositions proved in Ref. \cite{CS99}, one has the following commutative diagram
\[
\begin{CD}
\mathcal{P}(\mathcal{H}_1) @>{f}>> \mathcal{P}(\mathcal{H}_2)\\
@VV{\mathcal{C}_1}V @VV\mathcal{C}_2V\\
\mathcal{L}(\mathcal{H}_1) @>f_{\text{pr}}>> \mathcal{L}(\mathcal{H}_2)
\end{CD}
\]
This means that the probabilisitc state transition $f$ will induce the definite property transition $f_{\text{pr}}$\footnote{Note that one cannot tell whether this is  the transition probability, hence it is only a simple map.} between the property lattices.

Notice that the state transition function $f$ is probabilistically defined on the classical Boolean algebra in $\mathcal{P}(\mathcal{H})$, whereas the transition probability is defined on the Hilbert spaces $\mathcal{H}$ of states. This can be understood by referring to the relations between the classical and quantum event structures \cite{Zaf01}. In fact, since the Boolean algebra in $\mathcal{P}(\mathcal{H})$ is defined set-theoretically,  on $\mathcal{P}(\mathcal{H})$ one can define a category of presheaves, or a topos, ${\bf Set}^{\mathcal{P}(\mathcal{H})^{\text {op}}}$ (or $[\mathcal{P}(\mathcal{H})^{\text{op}},{\bf Set}]$). Then the operational resolution $\mathcal{C}$ is the left adjoint of the embedding {\it functor} from the category of quantum event structure $\mathcal{L}(\mathcal{H})$ to the topos ${\bf Set}^{\mathcal{P}(\mathcal{H})^{\text {op}}}$, where
an object $L\in\mathcal{L}(\mathcal{H})$ is mapped to a presheave on $\mathcal{P}(\mathcal{H})$. Now the state transition $f$ are transition maps between the Boolean charts which locally cover $\mathcal{L}$. So as on manifolds, the quantum transition probabilities can be obtained from those $f$ between charts. 

Since we have the quantaloid $\mathcal{Q}$, the enriched categorical structure over $\mathcal{Q}$ \cite{Stu05} can now be exploited. First,  $\mathcal{P}(\mathcal{H})$ can be considered as $\mathcal{Q}$-sets labelled by the operational resolution $\mathcal{C}$, so that  $\mathcal{P}(\mathcal{H})$ becomes a $\mathcal{Q}$-category with  as arrows the $f$ between its  elements.  Then  a presheaf on the $\mathcal{Q}$-category $\mathcal{P}(\mathcal{H})$ is the distributor from $*_\mathfrak{h}$ to $\mathcal{P}(\mathcal{H})$ where $*_\mathfrak{h}$ is the one-object $\mathcal{Q}$-category of $\mathfrak{h}\in\mathcal{P}(\mathcal{H})$. Intuitively a distributor is a  matrix of transition functions $f(\mathfrak{a}_1\rightarrow\mathfrak{b}_2),\mathfrak{a}_1\in\mathcal{P}(\mathcal{H}_1),\mathfrak{b}_2\in\mathcal{P}(\mathcal{H}_2)$ that plays the role of an arrow  between two $\mathcal{Q}$-categories. Hence as expected the presheaves relate the local structure to the global one. Now these allow us to explain the classical germs mentioned before. 
\begin{proposition}
Given a Hilbert space $\mathcal{H}$ of states, the operational resolution $\mathcal{C}:\mathcal{P}(\mathcal{H})\rightarrow\mathcal{L}(\mathcal{H})$ induces a classical germ.
\end{proposition}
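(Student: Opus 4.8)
The plan is to extract from the operational resolution $\mathcal{C}$ a family of injections from a classical phase space into the pure-state space whose mutual transition probabilities collapse to a Kronecker delta in an appropriate limit, i.e. exactly the data of a classical germ in the sense of \cite{Lan97}. First I would fix the Poisson manifold. The Boolean algebra sitting inside $\mathcal{P}(\mathcal{H})$ is commutative, so its atoms (the singletons $\{\psi\}$, equivalently the rays of $\mathcal{H}$) form a set on which $\mathcal{C}$ acts injectively; restricting to a maximal Boolean subalgebra $\mathcal{B}\subset\mathcal{P}(\mathcal{H})$ that $\mathcal{C}$ resolves onto a sublattice of $\mathcal{L}(\mathcal{H})$, the corresponding classical states are the normalized additive measures on the Gelfand spectrum of $\mathcal{B}$. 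Take $P$ to be this spectrum, equipped with the Poisson structure coming from the symplectic leaves of the presymplectic/group-averaging data $(f',f)_C$ used in \S3.1; the classical (Liouville) measures $\chi,\zeta$ on $P$ are then the measures singled out by the $\sigma$-additivity that is encoded in the join-preservation property $f(\cup_i\mathcal{T}_i)=\cup_i f(\mathcal{T}_i)$.

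Next I would build $q_\hbar$. Since $\mathcal{C}$ is a left adjoint of the fully faithful embedding $i:\mathcal{L}(\mathcal{H})\hookrightarrow\mathcal{P}(\mathcal{H})$, one has $\mathcal{C}\circ i=\mathrm{id}_{\mathcal{L}(\mathcal{H})}$, and the unit $\eta:\mathrm{id}_{\mathcal{P}(\mathcal{H})}\Rightarrow i\circ\mathcal{C}$ measures how far a subset is from being orthoclosed. For each $\hbar>0$ I choose a resolution scale — a refinement $\mathcal{B}_\hbar$ of the Boolean charts covering $\mathcal{L}(\mathcal{H})$ in the presheaf picture of \S4.2, directed so that $\mathcal{B}_\hbar$ tends to the algebra of atoms as $\hbar\to 0$ — and set $q_\hbar(\chi)$ to be the pure state obtained by representing $\chi$ as a measure on $\mathcal{B}_\hbar$, pushing it through $\mathcal{C}$ to a subspace of $\mathcal{H}$, and normalizing by a $C_{\mathcal{R}}$-type factor as in the spacetime-state construction. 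Injectivity of $q_\hbar$ at fixed $\hbar$ holds because $\mathcal{C}$ separates the atoms of $\mathcal{B}_\hbar$.

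Then I would verify the defining limit. For $\chi\neq\zeta$, once $\hbar$ is small enough the representing subsets $\mathcal{C}(\chi),\mathcal{C}(\zeta)$ lie in distinct, eventually orthogonal blocks of $\mathcal{B}_\hbar$; by condition (3) of Definition \ref{def21} — which, as noted in \S4.2, is precisely the subdivision property preserved by $\mathcal{C}$ — the transition probability between states supported on orthogonal subspaces vanishes, so $p(q_\hbar(\chi),q_\hbar(\zeta))\to 0$, while $p(q_\hbar(\chi),q_\hbar(\chi))=1$ for all $\hbar$. Joint continuity of $p$ in the metric $d(a,b)=\sup_c|p(a,c)-p(b,c)|$ then upgrades this to $\lim_{\hbar\to 0}p(q_\hbar(\chi),q_\hbar(\zeta))=\delta_{\chi,\zeta}$, which is exactly the condition that $\{q_\hbar\}$ be a classical germ. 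Compatibility with the Poisson structure is inherited from the fact that unitary evolutions saturate $T(\alpha,\beta)\leqslant T(m\alpha,m\beta)$, so the germ is stable under the mobility semigroup as required.

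The main obstacle — and the step I would treat most carefully — is manufacturing the deformation parameter $\hbar$ from a setup that is a priori purely order-theoretic: there is no canonical scale attached to $(\mathcal{P}(\mathcal{H}),\mathcal{L}(\mathcal{H}),\mathcal{C})$, so one must argue that the directed system of Boolean charts refining $\mathcal{L}(\mathcal{H})$ genuinely plays the role of a semiclassical limit (equivalently, that the localization/distributor picture of \S4.2 supplies a cofinal family of charts whose mesh can be driven to zero), and that the resulting classical measures really are Liouville measures for the chosen Poisson structure rather than arbitrary states. A secondary subtlety is showing that the normalization factors remain finite along the refinement, which is where one reuses the $C_{\mathcal{R}}$ estimates and the finiteness arguments already invoked for the spin foam sums.
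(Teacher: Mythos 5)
Your route is genuinely different from the paper's, and the obstacle you flag at the end is precisely the point where the two diverge. The paper does not attempt to manufacture an $\hbar$-indexed family of injections at all: its proof works directly at the Boolean endpoint. It posits that $\mathcal{L}(\mathcal{H})$ tends to a classical Boolean lattice $\mathcal{L}(\mathcal{B})$, observes that in that limit the operational resolution $\mathcal{C}$ becomes a set-theoretic bijection, and then uses the quantaloid machinery of \S4.2 — a commuting square relating the distributor $*_\mathfrak{h}\rightarrow\mathcal{P}(\mathcal{H})$ to the identity distributor in the Yoneda embedding $Y[\mathcal{L}(\mathcal{B})]$ — to identify the quantum transition functions with the Boolean state transitions $f$; taking the bottom $f$ to be zero forces the transition matrix to be $f(\mathfrak{a}\rightarrow\mathfrak{b})=\delta_{\mathfrak{ab}}$, which is read off as the defining condition of a classical germ. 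What the paper's approach buys is that it never needs a canonical scale: the ``limit'' is an assumption on the lattice, not a constructed deformation. What your approach buys, if it could be completed, is something closer to Landsman's actual definition — an explicit Poisson manifold $P$ (your Gelfand spectrum) and explicit injections $q_\hbar$ — which the paper never produces.

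That said, your construction as it stands is under-determined in exactly the way you suspect. The directed system of Boolean charts $\mathcal{B}_\hbar$ has no preferred cofinal parametrization by a real number tending to zero, and nothing in the order-theoretic data forces the pushed-forward measures to be Liouville measures for the Poisson structure you import from the group-averaging form of \S3.1 — that structure lives on a different object (the presymplectic constraint manifold of covariant QM) and is not induced by $\mathcal{C}$. Two smaller issues: the ``subdivision property'' you invoke is stated in \S4.1 as part of the equivalence relation on spin foams, not as a property preserved by $\mathcal{C}$; and joint continuity of $p$ in the metric $d$ gives you continuity in the state arguments, not convergence along your net of refinements, so the final limit still needs the eventual-orthogonality claim to be established independently. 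If you want to salvage your version, the cleanest fix is to do what the paper does implicitly: replace the $\hbar$-family by the single statement that when $\mathcal{C}$ restricts to a bijection onto a Boolean sublattice, the induced transition matrix on its atoms is the identity, and cite that as the degenerate (endpoint) form of a classical germ.
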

\begin{proof}
Consider again the operational resolution $\mathcal{C}$. If the quantum logic $\mathcal{L}(\mathcal{H})$ tends to a classical Boolean lattice $\mathcal{L}(\mathcal{B})$, then $\mathcal{C}$ is a set-theoretic bijection and hence one has the  bijection $a$ between the two $\mathcal{Q}$-presheaves such that
\begin{diagram}
*_\mathfrak{h}&\rTo^{a} &*\\
\dTo~{\text{distr.}} & &\dTo~{\text{ident. distr.}\in Y[\mathcal{L}(\mathcal{B})]}\\
\mathcal{P}(\mathcal{H})&\rTo^{\text{bijection}~\mathcal{C}} & \mathcal{L}(\mathcal{B})
\end{diagram}
commutes, where $Y[\mathcal{L}(\mathcal{B})]$ is the Yoneda embedding of the $\mathcal{Q}'$-sets $\mathcal{L}(\mathcal{B})$ induced by the identity distributor with the quantaloid $\mathcal{Q}'$  defined as before only with $\mathcal{L}(\mathcal{H})$ replaced by $\mathcal{L}(\mathcal{B})$. Therefore, in the classical limit the transition functions for quantum states can be identified to be the state transitions $f$ between the Boolean presheaves on $\mathcal{P}(\mathcal{H})$. If one takes the bottom $f$ as zero, then the matrix of transition function $f(\mathfrak{a}\rightarrow\mathfrak{b})=\delta_{\mathfrak{ab}}$ for $\mathfrak{a,b}\in\mathcal{P}(\mathcal{B})$ in this limit, i.e. a classical germ.
\end{proof}

Returning to spin foams, we have seen in the last section that the subspaces of concern are those subspaces of a tensor product space (or a cobordism) $\mathcal{H}_1\otimes\mathcal{H}_2$. However, since the work of D. Aerts \cite{Aerts82}, we know that separated compound quantum states can not be described by the property lattice of a tensor product of Hilbert spaces. A possible way out is to still work with the quantaloid $\mathcal{Q}$, and then to reconstruct the tensor product Hilbert space \cite{Coe00}. More explicitly, for two Hilbert spaces $\mathcal{H}_1$ and $\mathcal{H}_2$ constituting the product space  $\mathcal{H}_1\otimes\mathcal{H}_2$, one can form the quantaloid $\mathcal{Q}$ with two objects corresponding to $\mathcal{H}_1$ and $\mathcal{H}_2$. The hom-set $\mathcal{Q}(\mathcal{P}(\mathcal{H}_1),\mathcal{P}(\mathcal{H}_2))$ of state transitions  induces the property transitions $\mathcal{Q}(\mathcal{L}(\mathcal{H}_1),\mathcal{L}(\mathcal{H}_2))$ which as a complete lattice has the atomic bottom map that maps  atoms to atoms or zero. Such an atomic map further induces a linear operator $F\in B(\mathcal{H}_1,\mathcal{H}_2)$ mapping $\mathcal{H}_1$ to $\mathcal{H}_2$. Then  by the following isomorphisms,
\[\mathcal{H}'_1\otimes\mathcal{H}_2\cong B(\mathcal{H}_1,\mathcal{H}_2),\quad B'(\mathcal{H}_1,\mathcal{H}_2)\cong \mathcal{H}_1\otimes\mathcal{H}_2\]
where the prime denotes the dual space,
one recovers the product Hilbert space. Now in the case of cobordisms, the state transition between the individual entities (or subsystems) is manifested by the transition probability given in the last section, which also avoids the case of interaction-free subsystems. Therefore, in the current case the property transitions are reflected in the transitions between individual subsystems, whereas the compound state transitions can be defined directly on the compound system.

The structure we are playing with is in fact a {\it quantum topos} \cite{Cra07}, that is, a category of sheaves over a quantaloid. Over each object of the quantaloid $\mathcal{Q}$ there is the power set $\mathcal{P}(\mathcal{H})$, between the objects of which the maps $f$ define a complete lattice. In $\mathcal{P}(\mathcal{H})$ the set inclusions define a partial order and hence $\mathcal{P}(\mathcal{H})$ becomes a {\it site} (of a locale). A presheaf is a contravariant functor that takes the site to a set.
Since the state transitions $f$'s induce transition probabilities between Boolean charts, one can see that without summing over the intermediate states the $f$'s are lax:
\[f_{1,2}\otimes f_{2,3}\leqslant f_{1,3}\]
where the tensor product is in accordance with tensor product Hilbert space and the partial order is defined in the power set. This gives a category of presheaves over the quantaloid $\mathcal{Q}$. The transition prbabilities further define the gluing property for all presheaves, which results in a category of sheaves over quantaloid.

Note that although  {\bf SF} has been taken as a 2-category, in this subsection we have worked {\it inside} this 2-category. If one considers such structures on the entire 2-category, the resulting structure will be {\it (quantum) cosmos as the 2-category of stacks}, which should work well in the categorical generalization of spinfoam models, e.g. the spin-cube models based on the 2-categorical representations of the Poncar\'e 2-group \cite{Min13}.

\subsection{Causality: causal categories and causal sites}
Suppose we have a symmetric monoidal 2-category {\bf SF}, then we can include causality to form causal categories \cite{CL13} without further operational inputs. Recall that a causal category is a symmetric monoidal category whose unit object $1$ is terminal. A causal category is normalized if the terminal arrows $T_A:A\rightarrow1$ form a comma category $(\{A\},1)$ of objects over the terminal object $1$. The category {\bf SF} of spin foams  is evidently normalized if the transition probabilities are defined conditionally via Bayesian inversion. Indeed, from the conditional definition of transition probabilities, one can consider the conditional subspace inclusions as set inclusions so that the forgetful functor ${\bf SF}\rightarrow{\bf Set}$ creates colimits. Now consider the identity functor $1_{\bf SF}:{\bf SF}\rightarrow{\bf SF}$, then the existence of colimits implies that there is a left Kan extension $\text{Lan}_G1_{\bf SF}$ along the functor $G:{\bf SF}\rightarrow{\bf1}$ which is from {\bf SF} to the category of unit object,
\begin{diagram}
{\bf SF} &\lTo^{1_{\bf SF}} &{\bf SF} & \\
&\luTo_{\text{Lan}_G1_{\bf SF}} &\dTo_G \\
& &{\bf1}
\end{diagram}
 then by the formal criteria for the existence of adjoints, $\text{Lan}_G1_{\bf SF}$ is the right adjoint of $G$. From the above diagram we can infer that the colimit of $1_{\bf SF}$ is the unit object $1$, so if $G$ is taken to be the diagonal functor, then the terminal object is the unit object, $\text{Lan}_G1_{\bf SF}(1)=\text{Colim}1_{\bf SF}$=1. (To put it simply,
\[\text{Lan}_G1_{\bf SF}(1)=\int^{\sigma}{\bf1}(G\sigma,1)\cdot\sigma=\int^{\sigma}{\bf1}(\coprod_\sigma 1,1)=\underrightarrow{\lim}1_{\bf SF}.)\]

Next,  there are two ways presented in Ref. \cite{CL13} to inculde causality: one is by restrcting to the subcategory with causal structure; the other is to combine causal structure with the category. The existing ways to build  causality into spin foam models in the literature can be roughly classified into these two classes:
\begin{enumerate}
\item {\it By restricting}: one can directly put causal restrictions on the vertices of spin networks in such a way that they obey a partial order. See, e.g. Ref. \cite{Gup00}. Such a restrction is recently shown to be due to the fact that the  orientations of the {\it volume 4-vectors} should be restricted by the closure condition for the 4-simplices. So those with opposite orientation with repect to the whole manifold should be discarded. See, e.g. Ref. \cite{Imm16}.
\item {\it By combining}: as in field theory one can put  restrictions on the domain of integration over the proper time in the formal  path integral of spin foams. In Ref. \cite{Ori05}, a proper time variable is introduced into the GFT, so that restriction can be described by combining the original GFT with it.
\end{enumerate}
In either way, we obtain a causal category where the information flow in this case is represented by the definable transition probability.

There is a caveat that causal category might only be applied in categorical QM, since the creations/annihilations of particles in QFT are not define in category {\bf Hilb} of Hilbert spaces. In the current case, spin foam models are constrained TQFTs, so  loosely speaking the concept of causal category is still appropriate.

Then why are there two ways of implementing causality? On first thought, these two ways are equivalent or complementary. But it is possible for them to co-exist. Observe that the restriction of state spaces is only  necessary for implementing causality. Another partial order induced soley by causality is required. The two partial orders on the category define a bisimplicial set with causality, which has been termed as {\it causal sites} \cite{CC05}. Inside {\bf SF}, since we have defined sites for the power set $\mathcal{P}(\mathcal{H})$ for a state space $\mathcal{H}$, the complete lattice of the arrows $f$  defines the partial order of restrictions,  and at the same time they induce transition probabilities that are the causal partial order. Hence we obtain the structure of causal {\it2}-sites.


\section{Conclusion and Outlook}
We have studied the transition probability spaces in LQG and especially in spin foam models. Although they might be disadvantageous in calculating concrete quantities when compared to the orthodox quantum formalism, we have seen the advantages in studying structural issues such as the relations between canonical and covariant formalisms, and quantum logical structures. 

We have also discussed some  categorical structures of the transition probability spaces. However, these discussions are only {elementary} and can be further investigated in the following directions: (i) topos quantum theory or quantum topos; (ii) higher categories: since one can take the spin networks as 1-category so that {\bf SF} is a 3-category; the stacks over (causal) sites are possible tools to study unification theories.

On the other hand, the transition probability in this paper reminds us of the transition probability in stochastic processes. The transition probability spaces in LQG might open the window to intriguing questions like (i) rigorously constructing path integral for spin foams from stochastic processes as for conventional path integrals, and hence (ii) quantization of gravity from a stochastic point of view. (It does not deed to coincide with the established stochastic quantization technique.)

Moreover, what we have not yet studied in this work are  the various metrics or distances on transition probability spaces in LQG. These distances are useful as the measures for quantum correlations in the Hilbert space formulation. {
However, now the transition probabilities are defined by spin foam amplitudes, so these distances should measure the {\it evolving} correlations, or correlations  at a coarse-grained level. The entropy functionals on the quantum logic of LQG discussed in Sec.\ref{sec36} is an example of this.
The situation is similar to the tensor networks where the one gets one dimension higher through the renormalization group flow. But one should be careful if one wants to find  the Ryu-Takayanagi formula in this dynamical  context. On the one hand, in contrast with the kinematical case \cite{HH17}, at the level of spin foam amplitudes the structures of spin networks could be wiped out and the exact holographic mapping may not exist; on the other hand, these distances are between the transition probabilities instead of states, and the spin foam sums do not say much about the ``bulk" states or geometries. If one only considers spin networks states on boundary 3-geometries, then the causal evolution would be a fine-graining with finer triagulations instead of coarse-graing if one takes causality into account.

There is also a possible connection with braiding operations. Note that the  tensor product $\otimes$ in Sec.4.2 is used as in TQFT to combine the lower dimensional boundary manifolds of a cobordism. Although the LQG dynamics is localized on spin networks, we can consider only the spin foams when calculating transition probabilities, and hence one can further define the Gray tensor product  in the 2-category {\bf SF} of spin foams to make it monoidal. As an example the disjoint product $\sqcup$ of state spaces supported on disjoint patches of the spatial  surface, $\mathcal{H}_1\sqcup\mathcal{H}_2$, then it is possible for two spin foams first to intertwine with each other and then split into the transposed product $\mathcal{H}_2\sqcup\mathcal{H}_1$. This is a kind of braiding of two spin foams, which leads to anyon statistics even if we are in $(3+1)$-dimension. One way of describing such braidings is the loop braid group (see, e.g. Ref. \cite{BWC07}). There are of course more complicated braidings in $(3+1)$-dimensional topological phases such as three-string braidings (see, e.g. Refs. \cite{WW15}), which has a nice description via spacetime surgeries \cite{WWS16}. In the current framework, since the 2-morphisms relating the spin foams in {\bf SF} are defined by  coverings of the three-geometries branched over the spin networks, the braiding operation $\tau$ on spin foams can be described either by the braiding of the ribbon surfaces in the branching set, or by the Kirby diagrams as in the surgery theory of manifolds. In Ref. \cite{BP06} it is shown that both descriptions correspond bijectively to a braided monoidal category, so it is possible that the braiding operations are spacetime surgeries even when we are considering the quantum spacetime.

Finally, since the transition probability spaces can be interpreted operationally, we have to mention the recent beautiful work of L. Hardy \cite{Har16} where an operational framework for gravity is systematically developed. It would be interesting to study transition probability spaces and categorical structures in that framework. Otherwise, {\it ``...it is a foolish cave explorer who throws away a light"} \cite{CC05}.

\appendix

\section{Some Mathematical Definitions}\label{BBB}
Here we present some basic defintions in category theory \cite{MacLane} and quantum logics \cite{Red98} used in the main text. We go through these concepts in a brief (and informal) manner and for details one is referred to those cited works.

A category ${C}$ is a directed graph with identities and composition. The vertices $a,b,c,...$ are called objects, and the edges $f,g,h,...$ between vertices are called morphisms. If one takes a morphism of a category as an object of a new category and defines new 2-morphisms between the original 1-morphisms, one gets a 2-category. The higher categories can be defined in this way but with more complicated structures. So by a category we  mean a 2-category with objects, morphisms between objects, functors between morphisms, and natural transformations with two ways of composition (i.e. vertical and horizontal). Sometimes the vertical composition endow the hom-set a categorical strucutre $E$, and we say the category $C$ is enriched in $E$. 

A groupoid is a category with all its morphisms invertible. The invertibility of a morphism $f:a\rightarrow b$ in ${C}$ means that there is a morphism $f':b\rightarrow a$ such that $ff'=1_a,f'f=1_b$. If there is a natural transformation $\tau:C\xrightarrow{.} B$ such that the arrow $\tau c$ is  invertible in $B$, then $\tau$ is a natural isomorphism. Two categories $C$ and $B$ are equivalent if there is a pair of functors $S:C\rightarrow B,T:D\rightarrow C$ with their composites being natural isomorphisms, $1_C\cong T\circ S, 1_B\cong S\circ T$. 

A monoidal category $B\equiv(B,\otimes,e,\alpha,\lambda,\rho)$ consists of a category $B$, a bifunctor or tensor product $\otimes:B\times B\rightarrow B$, a unit object $I\in B$ and three natural isomorphisms (called
associator and left/right unitor)
\[\alpha:(b_1\otimes b_2)\otimes b_3\cong b_1\otimes(b_2\otimes b_3),\quad\lambda:I\otimes b\cong b,\quad\rho:b\otimes I\cong b\]
which satisfy many coherence conditions. Coherence can be formally understood as every diagram formed by $\alpha,\lambda,\rho$ commutes. 
The braiding in $B$ is a natural isomorphism $\tau:b_1\otimes b_2\cong b_2\otimes b_1$, and $B$ is symmetric if $\tau^2=1$. A  symmetric monoidal category of special interest in the main text is the cobordism category ${\bf Bord}_n$. A cobordism is a (smooth) $n$-dimensional manifold $M$ with disjoint boundary components $\partial M=W_1\sqcup W_2$ where the orientation of $W_2$ is opposite to that of $W_2$. The objects in ${\bf Bord}_n$ are $(n-1)$-dimensional maniflods $W$ on the boundary, the morphisms are $n$-dimensional manifolds $M$,  and the composition is the gluing of two $W$'s along the common boundary, and and the tensor product is the disjoint union $\sqcup$.  A topological quantum field theory (TQFT) is therefore a strong monoidal functor from ${\bf Bord}_n$ to a category of vector spaces such as Hilbert spaces
\[Z:{\bf Bord}_n\rightarrow{\bf Vect}.\]
Here a monoidal functor between monoidal categories $M,M'$ consists of the following tuple of functors satisfying the coherence condition with respect to the tensor product: an ordinary functor $F:M\rightarrow M'$, a bifunctor $F_2:F(a)\otimes F(b)\rightarrow F(a\otimes b)$, and a functor units $F_0:e'\rightarrow e$. A monoidal functor is strong if $F_2$ and $F_0$ are isomorphisms. If   $F_2$ and $F_0$ are not necessarily isomorphisms and satisfy no other conditions, we say the monoidal functor is lax.

Many important categorical structures requires very long descriptions and proofs. To be ultimately brief, let us jump directly to (co)ends. Consider two categories $C,B$ and two bifunctors $S,T:C^\text{op}\times C\rightarrow B$. A dinatural transformation $\alpha:S\xrightarrow{..}T$  assigns  to each $c\in C$ a morphism $\alpha_c:S(c,c)\rightarrow T(c,c)$ in $B$ such that for every $f:c\rightarrow c'$ in $C$ the following diagram
\begin{diagram}
S(c',c) &\rTo^{S(f,1)} &S(c,c)&\rTo^{\alpha_c}T(c,c)\\
\dTo^{S(1,f)} & &&\dTo_{T(1,f)} \\
S(c',c')&\rTo^{\alpha_{c'}} &T(c',c')&\rTo^{T(f,1)}T(c,c')
\end{diagram}
commutes. A wedge from $S$ to $b\in B$ is obtained by collapsing the $T$-part into an object $b$. An end of a functor $S:C^\text{op}\times C\rightarrow B$ consists of a universal wedge $\omega:e\xrightarrow{..}S$, where the universality means there exists a unique $h:x\rightarrow e$ such that the following diagram
\begin{diagram}
x&&&&\\
&\rdDashto~{h}\rdTo(4,2)^{\beta_b}\rdTo(2,4)_{\beta_c}&&&\\
& &e&\rTo_{\omega_b}&S(b,b)\\
& &\dTo_{\omega_{c}} & &\dTo_{S(1,f)}\\
& &S(c,c) &\rTo^{S(f,1)}&S(b,c)
\end{diagram}
commutes, and the constant object is usually called the end of $S$, denoted as
\[e=\int_cS(c,c).\]
Coends are dually defined as
\[e=\int^cS(c,c).\]
Then for every functor $T:C\rightarrow B$, one can form a bifunctor $S:C^{\text{op}}\times C\xrightarrow{Q}C\xrightarrow{T}B$ with $Q$ being the projection onto the second factor in such a way that the limit of $T$ is the end of $S$:
\[\underleftarrow{\lim} T\cong\int_cS(c,c).\]
The tensor product mentioned above of two functors $S:C^{\text{op}}\rightarrow B$ and $T:C\rightarrow B$ is therefore a coend
\[S\otimes_PT=\int^PS\otimes T.\]
Given a functor $K:M\rightarrow C$ and a category $A$, define the left and right Kan extensions as respectively the left and right adjoints to the functor of the functor categories $A^K:A^C\rightarrow A^M$. T hen for a functor $T:M\rightarrow A$ left and right Kan extensions of $T$ along $K$ can be  respectively expressed in terms of (co)ends as
\[\text{Lan}_KTc=\int^mC(Km,c)\cdot Tm,\quad \text{Ran}_KTc=\int_mTm^{C(c,Km)},\quad\forall m\in M,c\in C\]
where $C(c_1,c_2)\cdot c_3=C(\coprod_{c_3}c_1,c_2)$ is the copower (i.e. coproduct  with all the factors $c_1$ equal) with $C(c_1,c_2)$ being the morphisms in $C$  and $c_1^{C(c_1,c_2)}=\prod_{f\in C(c_1,c_2)}c_1$ is the power. The Kan extensions subsume all concepts in category theory; cf. Ref. \cite{MacLane}. 
For instance, the right Kan extension $\text{Ran}_{K}1_M$ of the identity functor $1_M$ along the functor $K$ is a left adjoint of $K$.
Likewise, the Yoneda lemma is the right Kan extension of $T:M\rightarrow {\bf Set}$ along the identity functor $K=1_C$,
\[Tc=\text{Ran}_{1_C}Tc=\int_mTm^{C(c,m)}=\int_m{\bf Set}(C(c,m),Tm)\cong\text{Nat}(C(c,-),T).\]
We say $T$ is representable and $Y:c\rightarrow C(c,-)$ is a Yoneda embedding. Then a right Kan extension is point-wise if it is preserved by all representable functors $A(a,-):A\rightarrow{\bf Set}$, i.e.
\[A(a,-)\int_mTm^{C(c,Km)}=\int_mA(a,-)\circ Tm^{C(c,Km)}.\]

Now let us recall some quantum logics. A quantum logic is an orthomodular lattice $L\equiv(L,\land,\lor,~^\neg,\leqslant,0,1)$ where $0$ and $1$ are respectively smallest and largest element. The orthomodularity means that $a\leqslant b\Rightarrow b=a\lor(a^\neg\land b)$ for $a,b\in L$.  An element $x\in L$ is an atom if $0<x$ and there exists no element $y\in L$ such that $0<y<x$. Then L is atomic if for every nonzero element $x\in L$ there exists an atom $a$ of $L$ such that $a\leqslant x$.
 A state on $L$ is a  map $\mu:L\rightarrow[0,1]$ such that $\mu(1)=1$ and for any family $\{a_i\}$ of pairwise orthogonal elements in $L$, $\mu(\lor_ia_i)=\sum_i\mu(a_i)$. 

We can consider other lattice structures \cite{Cra07}. A sup complete lattice is a lattice with infinite superema, and a locale is a distributive sup complete lattice. Here by complete we mean as uasual every element has a supremum. Consider a category whose elements are those of a locale and a morphism between $a$ and $b$ iff $a\leqslant b$. This defines a site of the locale. Then a presheaf on a locale is a contavariant functor $[\text{site}^{\text{op}},{\bf Set}]$, and by requiring gluing property for different sections in a presheaf one gets a sheaf. A category of presheaves is an example of a topos (which is sufficient this paper). If the site is endowed with a Grothendieck topology, then the category of sheaves over such sites is a Grothendieck topos.

\section{Spin Foams and 4-geons}\label{AAA}
In this appendix, we show that spin foams can be regarded as (qauntum) 4-geons from which quantum logics can be derived. Here the transition probability space no longer plays any role.

Geons are topologically nontrivial regions in spacetime that model the elementary particles. Usually the geons are 3D, while in Ref. \cite{Had97} a speculation is proposed that there exists 4D 4-geons such that the quantum logic can be constructed from the spacetime structures of 4-geons. Although alleged as classical, the 4-geons should be the {\it atoms} of quantum spacetime if the quantum logic is credited as the characteristic feature of a quantum theory. In the following, we argue that spin foams satisfy the  axioms defining the 4-geons.

Spin foam models can be understood as the description of dynamical evolutions of the boundary spin network states, or directly as the combination of their basic building blocks.
Therefore let us recall the combinatorial structures of spin foams \cite{ORT15}. To define a spin foam model, one starts with a set $\mathfrak{A}$ of boundary graphs, the bisected  versions $\mathfrak{B}$  of which are obtained by introducing new vertices on each edge of graphs in $\mathfrak{A}$ such that the new vertices are bivalent, i.e. a subdivions of the complexes. Connecting the bisected boundary graphs to a bulk vertex forms a spin foam {\it atom} $\mathfrak{a}$ with its faces defined by three distinct vertices. These spin foam atoms can be bound to form a spin foam {\it molecule} $\mathfrak{m}$, if the bounding faces of the atoms have the same numbers of vertices and edges. The spin foam molecules constitute the whole spin foam. With these in mind, we have the following
\begin{proposition}
The spin foam atoms are 4-geons in a quantum spacetime.
\end{proposition}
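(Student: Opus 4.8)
The plan is to check, axiom by axiom, the conditions that Hadley~\cite{Had97} imposes on a 4-geon and to exhibit for each one the corresponding structural feature of a spin foam atom $\mathfrak{a}$. First I would make precise the sense in which $\mathfrak{a}$ is a four-dimensional region. Although $\mathfrak{a}$ is defined purely combinatorially---one bulk vertex joined to the bisected boundary graphs, with faces carried by triples of vertices---it is the $2$-skeleton of the cell complex dual to a single $4$-simplex, and using the embedding remark of the sGBF together with the branched-cover description recalled in Sec.~\ref{sec4} it thickens to a compact four-manifold-with-corners $R_{\mathfrak{a}}$ whose boundary is the disjoint union of the bisected $3$-graphs. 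This supplies the ``$4$'' in ``$4$-geon'' and the boundary/asymptotic data that Hadley requires, with the bisected boundary graphs playing the role of the asymptotic region.

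Next I would verify the two substantive geon axioms: nontrivial topology and the absence of a global time function. For the first, the branched-covering picture $\Gamma\subset S^{3}\leftarrow W\rightarrow S^{3}\supset\Gamma'$ recalled in Sec.~\ref{sec4} presents $R_{\mathfrak{a}}$ as a cover of $S^{3}\times[0,1]$ branched along the spin network embedded in it; the branch locus renders the complement non-simply-connected, so $R_{\mathfrak{a}}$ is topologically nontrivial in exactly the sense a geon demands. For the second, I would use the orientation data carried by $\mathfrak{a}$: a generic spin foam atom comes with \emph{both} orientations of its volume $4$-vectors, and only after imposing the closure condition---the causal restriction discussed in Sec.~\ref{sec4}---does one select a globally time-oriented representative. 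Hence, before that restriction is imposed, $R_{\mathfrak{a}}$ admits no foliation by copies of its boundary graphs, which is precisely the acausality that Hadley exploits to manufacture a non-Boolean logic.

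Finally, the logical axiom: I would show that the lattice of propositions associated with $\mathfrak{a}$ is the orthomodular lattice $L_{p}$ constructed in Sec.~\ref{sec36}, and not a Boolean algebra. The orthocomplemented, atomic, orthomodular structure is automatic from conditions (2) and (3) of Definition~\ref{def21} via the reconstruction of Refs.~\cite{Bel76,Del84}; the point requiring work is failure of distributivity, which I would trace to the fact that the bulk vertex amplitude $\text{Tr}_{v}\sigma$ is not determined by the boundary data---the acausality above means that forming the join of two ``spatial'' propositions and then refining differs from refining first, so the distributive law breaks. Combining the three checks yields the statement.

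\textbf{Main obstacle.} The hard part is making the middle step rigorous: one must show that the discrete combinatorial atom genuinely carries the $4$-geon topology and acausality---that the failure of a global time function is a property of $R_{\mathfrak{a}}$ itself rather than an artefact of the thickening---and, correspondingly, that the induced $L_{p}$ is \emph{provably} non-distributive and not merely orthomodular. I expect this to require analyzing the noncommutativity of the intertwiner contractions at the bulk vertex as the algebraic shadow of the missing foliation.
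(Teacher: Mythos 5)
Your proof verifies a different axiomatization of ``4-geon'' than the one the paper actually uses, and the steps you yourself flag as the hard part are precisely the ones that remain open. The paper's proof (Appendix \ref{AAA}) takes Hadley's defining conditions to be three specific axioms --- \emph{localization}, \emph{particle-likeness} (measurements yield only true/false values), and \emph{state preparation and measurement} (suitable boundary conditions) --- and checks each against a combinatorial feature of the atom $\mathfrak{a}$ of \cite{ORT15}: micro-locality in the sense of \cite{MS07} for the first; the fact that a hypothetical test for the presence of the bulk vertex $v$ returns a delta function $\delta(v)$, i.e.\ a true/false outcome, for the second; and the bonding maps, which only identify faces with matching numbers of vertices and edges and thereby constrain admissible molecules, for the third. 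Your proposal never addresses the particle-like axiom or the bonding/boundary-condition axiom at all; instead you substitute four-dimensionality, nontrivial topology, acausality, and non-Booleanity of the proposition lattice. Those are closer to Hadley's original mechanism (failure of a global time orientation producing a non-distributive logic), but they are not the conditions this proposition is asserting, so even if every step of yours went through you would have proved a different statement.

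Beyond the mismatch of targets, several of your individual steps would not go through as written. The identification of $\mathfrak{a}$ with the 2-skeleton dual to a single 4-simplex is too restrictive --- the atoms of \cite{ORT15} are built from arbitrary bisected boundary graphs, not only simplicial ones --- and the thickening to a four-manifold-with-corners $R_{\mathfrak{a}}$ is exactly the kind of continuum structure the sGBF was introduced to avoid presupposing; the branched-cover picture in Sec.~\ref{sec4} relates \emph{pairs} of spin networks and supplies the 2-cells of the 2-category $\mathbf{SF}$, it is not a construction performed on a single atom. Your final step also conflicts with the paper's explicit remark that in Appendix \ref{AAA} ``the transition probability space no longer plays any role'': the paper derives the orthomodular lattice \emph{from} the 4-geon structure following \cite{Had97}, whereas you propose to import $L_p$ from Sec.~\ref{sec36}, which is built from the transition probabilities --- that reverses the logical direction and, as you concede, leaves non-distributivity unproven. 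The paper's argument is admittedly informal, but it closes all three of the axioms it sets out to check; yours leaves its two central claims (acausality of $R_{\mathfrak{a}}$ and non-distributivity of $L_p$) as acknowledged obstacles.
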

\begin{proof}
We have to show that spin foam atoms satisfy the defining axioms of 4-geons conjectured in Ref. \cite{Had97}:
\begin{itemize}
\item{\it localization}: the 4-geon spacetime should be localized in small (quantum region) and asymptotically flat;
\item{\it particle-like}: measurements in 4-geons should yield a true or a false value only, just like measuring classical particles;
\item{\it state preparation and measurement}: suitable boundary conditions for 4-geons.
\end{itemize}
 First, the micro-locality \cite{MS07} of the combinatorical structures of spin foams satisfies the localization axiom. The problem of disordered locality will not arise since we stay in the quantum regime. Second, since a generic spin foam sum depends on the vertex amplitudes, a hypothetical experiment testing the presence of a bulk vertex $v$ in a spin foam atom will give a true/false value only, or a delta function $\delta(v)$. This fulfills particle-like axiom.\footnote{In GFT, these spin foam atoms contribute to the second quantized many-body wave functions as single particle wave functions. In this sense, it is indeed particle-like.} Third, the axiom of state preparation and measurement  is now reflected in the bonding of the faces of spin foam atoms, because the bonding maps only identify  faces with the same number of  vertices and edges, which restricts the possible combinations of atoms forming a spin foam molecule. Since the spin foams themselves are results of the quantization of gravity, such a restriction effectively sets the boundary conditions for possible gravitational solutions.
\end{proof}

With this result, one can proceed as in Ref. \cite{Had97} to construct an atomic orthomodular lattice of 4-geons, which in turn is the quantum logic $L_p$ of LQG.

\bibliographystyle{amsalpha}

\begin{thebibliography}{99}

\bibitem {BvN36} G. Birkhoff, J. von Neumann, The logic of quantum mechanics, Ann. Math. {\bf37}, 823 (1936).

\bibitem{Piron64} C. Piron, Axiomatique quantique,  Helv. Phys. Acta {\bf37}, 439 (1964).

\bibitem{Red98}  M. R\'edei, {\it Quantum Logic in Algebraic Approach}, Kluwer, Dordrecht, 1998.

\bibitem{Miel68} B. Mielnik, Geometry of quantum states, Commun. Math. Phys. {\bf9}, 55 (1968).

\bibitem{Bel76} J. G. F. Belinfante, Transition probability spaces, J. Math. Phys. {\bf17}, 285 (1976).

\bibitem{Can75} V. Cantoni, Generalized ``transition probability", Commun. Math. Phys. {\bf44}, 125 (1975).

\bibitem{Can82} V. Cantoni, Generalized transition probability, mobility and symmetries, Commun. Math. Phys. {\bf87}, 153 (1982).

\bibitem{Gud81} S. P. Gudder, Expectation and transition probability, Int. J. Theor. Phys. {\bf20}, 383 (1981).


\bibitem{Gun67} J. Gunson, On the algebraic structure of quantum mechanics, Commun. Math. Phys. {\bf6}, 262 (1967).

\bibitem{HSP14}  F. Holik, M. S\'aenz, A. Plastino, A discussion on the origin of quantum probabilities, Ann. Phys. {\bf340}, 293 (2014).

\bibitem{Perez13} A. Perez, { The Spin-Foam Approach to Quantum Gravity}, Living Rev. Relativity, {\bf16}, 3 (2013).

\bibitem{Thiebook} T. Thiemann, {\it Modern Canonical Quantum General Relativity}, Cambridge University
Press, Cambridge, 2007.



\bibitem{RR02} M. Reisenberger, C. Rovelli, Spacetime states and covariant quantum theory, Phys. Rev. D {\bf65}, 125016 (2002).

\bibitem{Chiou13} D.-W. Chiou, Timeless path integral for relativistic quantum mechanics, Class. Quantum Grav. {\bf30}, 125004 (2013).

\bibitem{oe03a} R. Oeckl, A ``general boundary" formulation for quantum mechanics and quantum gravity, Phys. Lett. B {\bf575}, 318 (2003).

\bibitem{CF08} F. Conrady, L. Freidel, Path integral representation of spin foam models of 4D gravity, Class. Quantum Grav. {\bf25}, 245010 (2008).
\bibitem{Sor07} R. D. Sorkin, Quantum dynamics without the wavefunction, J. Phys. A: Math. Theor. {\bf40}, 3207 (2007).
\bibitem{MS97} F. Markopoulou, L. Smolin, Causal evolution of spin networks, Nucl. Phys. B {\bf508}, 409 (1997).
\bibitem{Flo09} C. Flori, {\it Approaches to quantum gravity}, Ph. D. Thesis, Humboldt-Universit\"at zu Berlin, 2009.
\bibitem{Ant94} F. Antonsen, Logics and quantum gravity, Int. J. Theor. Phys. {\bf33}, 1985 (1994).
\bibitem{Baez98} J. C. Baez, Spin foam models, Class. Quantum Grav. {\bf15}, 1827 (1998).

\bibitem{DMY10} D. Denicola, M. Marcolli, A. Z. al-Yasry, Spin foams and noncommutative geometry, Class. Quantum Grav. {\bf27}, 205025 (2010).
\bibitem{KKL10} W. Kami\'nski, M. Kisielowski, J. Lewandowski, Spin-foams for all loop quantum gravity, Class. Quantum Grav. {\bf27}, 095006 (2010).
\bibitem{ORT15} D. Oriti, J. P. Ryan, J. Th\"urigen, Group field theories for all loop quantum gravity, New. J. Phys. {\bf17}, 023042 (2015).
\bibitem{MacLane} S. Mac Lane, {\it Categories for the Working Mathematician}, 2nd edition, Springer, New York, 1998.

\bibitem{Uhl76} A. Uhlmann, The ``transition probability" in the state space of a *-algebra, Rep. Math. Phys. {\bf9}, 273 (1976).

\bibitem{Zab75} P. C. Zabey, Reconstruction theorems in quantum mechanics, Found. Phys. {\bf5}, 323 (1975).

\bibitem{Del84} P. C. Deliyannis, Quantum logics derived from asymmetric Mielnik forms, Int. J. Theor. Phys. {\bf23}, 217 (1984).

\bibitem{Bug74} K. Bugajska, On the representation theorem for quantum logic, Int. J. Theor. Phys. {\bf9}, 93 (1974).


\bibitem{MZC09} Z.-h. Ma, F.-L. Zhang, J.-L. Chen, Fidelity induced distance measures for quantum states, Phys. Lett. A {\bf373}, 3407 (2009). 

\bibitem{Pul86} S. Pulmannov\'a, Transition probability space, J. Math. Phys. {\bf27}, 1791 (1986).

\bibitem{Mack63} G. W. Mackey, {\it Mathematical Foundation of Quantum Mechanics}, Benjamin, New York, 1963.

\bibitem{AR82} H. Araki, G. A. Raggio, A remark on transition probability, Lett. Math. Phys. {\bf6}, 237 (1982).

\bibitem{Lan97} N. P. Landsman, Poisson spaces with a transition probability, Rev. Math. Phys. {\bf9}, 29 (1997).

\bibitem{Lan96} N. P. Landsman, Classical behavior in quantum mechanics: A transition probability approach, Int. J. Mod. Phys. B {\bf10}, 1545 (1996).



\bibitem{Pul89} S. Pulmannov\'a, Mielnik and Cantoni transition probabilities, Int. J. Theor. Phys. {\bf28}, 711 (1989).




\bibitem{GP87} S. Gudder, S. Pulmannov\'a, Transition amplitude spaces, J. Math. Phys. {\bf28}, 376 (1987).

\bibitem{OC15} O. Oreshkov, N. J. Cerf, Operational formulation of time reversal in quantum theory, Nature Phys. {\bf11}, 853 (2015).

\bibitem{OC16} O. Oreshkov, N. J. Cerf, Operational quantum theory without predefined time, New J. Phys. {\bf18}, 073037 (2016).

\bibitem{CoeckeQP} B. Coecke, Quantum picturalism, Contemp. Phys. {\bf51}, 59 (2010). 

\bibitem{HMPR07} F. Hellmann, M. Mondragon, A. Perez, C. Rovelli, Multi-event probability in general-relativistic quantum mechanics, Phys. Rev. D {\bf75}, 084033 (2007).

\bibitem{Rov04} C. Rovelli, {\it Quantum Gravity}, Cambridge University Press, Cambridge, 2004.


\bibitem{oe16} R. Oeckl, A local and operational framework for the foundations of physics, arXiv:1610.09052v2.

\bibitem{Atiyah} M. Atiyah, Topological quantum field theories, Inst. Hautes \'Etudes Sci. Publ. Math. {\bf68}, 175 (1989).

\bibitem{oe07} R. Oeckl, Probabilities in the general boundary formulation, J. Phys.: Conf. Ser. {\bf67}, 012049 (2007).

\bibitem{MR05} L. Modesto, C. Rovelli, Particle scattering in loop quantum gravity, Phys. Rev. Lett.
{\bf95}, 191301 (2005).



\bibitem{PR11} A. Perez, C. Rovelli, Observables in quantum gravity, in {\it Quanta of Maths}, ed. by E. Blanchard {\it et al}, pp.501, AMS, 2011.

\bibitem{Mik01} A. Mikovi\'c, Quantum field theory of spin networks, Class. Quantum Grav. {\bf18}, 2827 (2001).

\bibitem{Ori16} D. Oriti, Group field theory as the second quantization of loop quantum gravity, Class. Quantum Grav. {\bf33}, 085005 (2016).

\bibitem{BP10} M. Bojowald, A. Perez, Spin foam quantization and anomalies, Gen. Relativ. Gravit. {\bf42}, 877 (2010).

\bibitem{MP12} E. Magliaro, C. Perini, Local spin foams, Int. J. Mod. Phys. D {\bf21}, 1250090 (2012).

\bibitem{MV13} A. Mikovi\'c, M. Vojinovi\'c, A finiteness bound for the EPRL/FK spin foam model, Class. Quantum Grav. {\bf30}, 035001 (2013).



\bibitem{DL10} M. Dupuis, E. R. Livine, Lifting SU(2) spin networks to projected spin networks, Phys. Rev. D {\bf82}, 064044 (2010).

\bibitem{LS07} E. R. Livine, S. Speziale, New spinfoam vertex for quantum gravity, Phys. Rev. D {\bf76},
084028 (2007).

\bibitem{Pul90} S. Pulmannov\'a, Transition amplitude spaces and quantum logics with vector-valued states, Int. J. Theor. Phys. {\bf29}, 455 (1990).

\bibitem{BM91} E. G. Beltrametti, M. J. Maczynski, On a characterization of classical and nonclassical probablities, J. Math. Phys. {\bf32}, 1280 (1991).

\bibitem{Yuan05} H.-J. Yuan, Entropy of partition on quantum logic, Commun. Theor. Phys. {\bf43}, 437 (2005). 

\bibitem{D13} D. Deutsch, Constructor theory, Synthese, {\bf190}, 4331 (2013).


\bibitem{Uhl85} A. Uhlmann, The transition probability for states of *-algebra, Ann. Physik (Leipzig) {\bf42}, 524 (1985).

\bibitem{CS99} B. Coecke, I. Stubbe, Operational resolutions and state transitions in a categorical setting, Found. Phys. Lett. {\bf12}, 29 (1999).

\bibitem{Zaf01} E. Zafiris, On quantum event structures I,II, Found. Phys. Lett. {\bf14}, 147-177 (2001).


\bibitem{Stu05} I. Stubbe, Categorical structures enriched in a quantaloid: Categories, distributors and functors, Theory Appl. Categ. {\bf14}, 1 (2005).

\bibitem{Aerts82} D. Aerts, Description of many separated physical entities without the paradoxes encountered in quantum mechanics, Found. Phys. {\bf12}, 1131 (1982).

\bibitem{Coe00} B. Coecke, Structural characterization of compoundness, Int. J. Theor. Phys. {\bf39}, 585 (2000).
\bibitem{Cra07} L. Crane, What is the mathematical structure of quantum spacetime?, arXiv:0706.4452.
\bibitem{Min13} A. Minkovi\'c, Spin-cube models of quantum gravity, Rev. Math. Phys. {\bf25}, 1343008 (2013).





\bibitem{CL13} B. Coecke, R. Lal, Causal categories: Relativistically interacting processes, Found. Phys. {\bf43}, 458 (2013).

\bibitem{Gup00} S. Gupta, Causality in spin foam models, Phys. Rev. D {\bf61}, 064014 (2000).
\bibitem{Imm16} G. Immirzi, Causal spin foams, arXiv:1610.04462v2.

\bibitem{Ori05} D. Oriti, Feynman propagator for spin foam quantum gravity, Phys. Rev. Lett. {\bf94}, 111301 (2005).
\bibitem{CC05} J. D. Christensen, L. Crane, Causal sites as quantum geometry, J. Math. Phys. {\bf46}, 122502 (2005).
\bibitem{HH17} M.-x. Han, L.-Y. Hung, Loop quantum gravity, exact holographic mapping, and holographic entanglement entropy, Phys. Rev. D {\bf95}, 024011 (2017).

\bibitem{BWC07} J. C. Baez, D. K. Wise, A. S. Crans, Exotic statisitcs for strings in 4d BF theory, Adv. Theor. Math. Phys. {\bf11}, 707 (2007).



\bibitem{WW15} J. C. Wang, X.-G. Wen, Non-Abelian string and particle braiding in topological order: Modular SL$(3,\mathbb{Z})$  representation and (3+1)-dimensional twisted gauge theory, Phys. Rev. B {\bf91}, 035134 (2015).




\bibitem{WWS16} J.  Wang, X.-G. Wen, S.-T. Yau, Quantum statistics and spacetime surgery, arXiv:1602.05951v3.

\bibitem{BP06} I. Bobtcheva, R. Piergallini, A universal invariant of four-dimensional 2-handlebodies and three-manifolds, arXiv:math/0612806.




\bibitem{Har16} L. Hardy, Operational general relativity: Possibilistic, probabilistic, and quantum, arXiv:1608.06940.


\bibitem{Had97} M. J. Hadley, The logic of quantum mechanics derived from classical general relativity, Found. Phys. Lett. {\bf10}, 43 (1997).

\bibitem{MS07} F. Markopoulou, L. Smolin, Disordered locality in loop quantum gravity states, Class. Quantum Grav. {\bf24}, 3813 (2007). 









\end{thebibliography}

\end{document}